\newtheorem{theorem}{\hskip\parindent\bf{Theorem}}
\newtheorem{lemma}{\hskip\parindent\bf{Lemma}}
\newtheorem{proposition}{\hskip\parindent\bf{Proposition}}
\newtheorem{definition}{\hskip\parindent\bf{Definition}}
\def\Pr{\qopname\relax n{\mathbf{Pr}}}
\renewcommand{\bar}{\overline}
\def\min{\qopname\relax n{min}}
\def\max{\qopname\relax n{max}}
\def\Pr{\qopname\relax n{\mathbf{Pr}}}
\newcommand{\RR}{\mathbb{R}}
\newcommand{\mini}[1]{\mbox{minimize} & {#1} &\\}
\newcommand{\maxi}[1]{\mbox{maximize} & {#1 } & \\}
\newcommand{\st}{\mbox{subject to} }
\newcommand{\con}[1]{&#1 & \\}
\newcommand{\qcon}[2]{&#1, & \mbox{for } #2.  \\}
\newenvironment{lp}{\begin{equation}  \begin{array}{lll}}{\end{array}\end{equation}}
\newenvironment{lp*}{\begin{equation*}  \begin{array}{lll}}{\end{array}\end{equation*}}
\date{}
\begin{document}
\title{Security Games with Information Leakage: Modeling and Computation}
\author{
	\begin{tabular}{ccccc}
		\centering
		 Haifeng Xu \thanks{University of Southern California, Los Angeles, USA; \{haifengx,aruneshs,shaddin,tambe\}@usc.edu.  Shaddin Dughmi is supported in part by  NSF CAREER Award CCF-1350900. Haifeng Xu is supported by NSF grant CCF-1350900, MURI grant W911NF-11-1-0332 and US-Naval Research grant Z14-12072.} & \quad \quad &   Albert X. Jiang \thanks{Trinity University, San Antonio, USA; xjiang@trinity.edu.} & \quad \quad  &  Arunesh Sinha \footnotemark[1]\tabularnewline
		& & & & \tabularnewline
		Zinovi Rabinovich \thanks{Independent Researcher, Israel; zr@zinovi.net. }& & Shaddin Dughmi \footnotemark[1]&  & Milind Tambe \footnotemark[1]  \tabularnewline
	\end{tabular}	
	}
	
\maketitle

\begin{abstract}
Most models of Stackelberg security games assume that the attacker only knows the defender's mixed strategy, but is not able to observe (even partially) the instantiated pure strategy. Such partial observation of the deployed pure strategy  -- an issue we refer to as {\it information leakage} -- is a significant concern in practical applications. While previous research on patrolling games has considered the attacker's real-time surveillance, our settings, therefore models and techniques, are fundamentally different. More specifically, after describing the information leakage model, we start with an LP formulation to compute the defender's optimal strategy in the presence of  leakage. Perhaps surprisingly, we show that a key subproblem to solve this LP (more precisely, the defender oracle) is NP-hard {\it even} for the simplest of security game models. We then approach the problem from three possible directions: efficient algorithms for restricted cases, approximation algorithms, and heuristic algorithms for sampling that improves upon the status quo. Our experiments confirm the necessity of handling information leakage and the advantage of our algorithms.

%Most security games assume that the attacker only knows the defender's mixed strategy, but is not able to observe (even partially) the instantiated pure strategy. This fails to capture the cases where the attacker does online surveillance and may get partial observation regarding the deployed pure strategy. Despite its potential presence in reality, such issue, to which we refer as {\it information leakage}, has not been payed enough attention in literature.

%This paper considers {\it partial} information leakage in security games. Starting with an exponential-size LP formulation to compute the defender's optimal strategy against leakage, we first prove a hardness result showing that the defender oracle is NP-hard {\it even} for the basic security game models with no scheduling constraints. This shows the intrinsic difficulty of handling information leakage. We then approach the problem from three possible directions: efficient algorithms for restricted cases, approximation algorithms and better sampling algorithms. Our experiments further confirm the necessity of handling information leakage and the advantages of our new algorithms.

\end{abstract}

\newpage

\section{Introduction}
Stackelberg security games played between a defender (leader) and an attacker (follower) have been widely studied in the past few years \cite{korzhyk_etal_2010,Joshua11,letchford_etal_2011,Tambe2011}. Most models, in particular, including all the deployed security systems in \cite{Tambe2011}, assume that the attacker is not able to observe (even partially) the defender's instantiated pure strategy (i.e., which targets are being protected), thus he makes decisions based only on his knowledge of the defender's mixed strategy. This fails to capture the attacker's real-time surveillance, by which he may {\it partially} observe the deployed pure strategy. For example, the attacker may observe the protection status of a certain target while approaching for an attack; or in some security domains information regarding the protection status of certain targets may leak to the attacker due to real-time surveillance or even an insider threat; further, well-prepared attackers may approach certain adversarially chosen target to collect information before committing an attack.

Unfortunately, this problem-- an issue we refer to as {\it information leakage} -- has not received much attention in Stackelberg security games. In the literature of patrolling games, attackers' real-time surveillance is indeed considered \cite{noa08a,noa08b,Gatti1,Gatti2,Bosansky,Vorobeychik14a}. However, all these papers study settings of patrols carried out over space and time, i.e., the defender follows a schedule of visits to multiple  targets over time. In addition, they assume that it takes time for the attacker to execute an attack, during which the defender can interrupt the attacker by visiting the attacked target.
% Furthermore, it is assumed that it takes more time for the attacker to execute a single attack than the defender to move from one target to another. 
%However, the defender can not always reach the attacked target on time either due to scheduling or spacial motion constraints. 
Therefore, even if the attacker can fully observe the current position of the defender (in essence, status of {\it all} targets), he may not have enough time to complete an attack on a target before being interrupted by the defender. The main challenge there is to create  patrolling schedules with the smallest possible time between any two target visits.  In contrast, we consider information leakage in standard security game models, where the attack is \emph{instantaneous} and cannot be interrupted by the defender's resource re-allocation. Furthermore, as may be more realistic in our settings, we assume that information is leaked from a limited number of targets. %Another difference is that the attacker can not observe simultaneously multiple targets, but rather receives a single burst of data (a leak) about a very small subset of targets. 
%As a result, our setting necessitates a study with novel models and techniques, where complexity analysis and efficient algorithms are unknown. 
As a result, our setting necessitates novel models and techniques. We also provide efficient algorithms with complexity analysis.

%However, all these papers study the setting where the attacker's action is not instantaneous, but rather extended over time. In these work, the attacker is allowed to view the protection status of targets at any time and the defender is able to intercept the attacker, but her access to targets is spatially constrained. The main challenge in these models is to alleviate the unreachability due to spatial constraints in a time-extended patrolling policy. In contrast, in our model the attack is instantaneous, as most Stackelberg security game models assume, and furthermore the set of leaking targets is limited. Therefore, the models and techniques required in our settings are different. 

This paper considers the design of optimal defender strategy in the presence of {\it partial} information leakage.
%we ask the following basic question: what if the attacker has the ability to observe the protection status of {\it one} target (either probabilistically or adversarially chosen)? 
%That is, the information regarding the instantiated protection of one target leaks to the attacker.
Considering that real-time surveillance is costly in practice, we explicitly assume that information leaks from {\it only one} target, though our model and algorithms can be generalized.
We start from the basic security game model where the defender allocates $k$ resources to protect $n$ targets without any scheduling constraint. Such models have applications in real security systems like ARMOR for LAX airport and GUARDS  for airports in general~\cite{Tambe2011}. 
We first show via a concrete example in Section~\ref{sec:model} how ignoring information leakage can lead to significant utility loss. This motivates our design of optimal defending strategy given the possibility of information leakage. We start with a linear program formulation.
%Intuitively, whereas for standard security game models it is sufficient to compactly represent 
%defender strategies using marginal probabilities on target's protection, 
However, surprisingly, we show that
it is difficult to solve the LP {\it even} for this basic case, whereas the optimal mixed strategy without leakage can be computed easily. In particular, we show that the defender oracle, a key subproblem used in the column generation technique employed for most security games, is NP-hard. 
This shows the intrinsic difficulty of handling information leakage. We then approach the problem from three directions: efficient algorithms for special cases, approximation algorithms and heuristic algorithms for sampling that improves upon the status quo. Our experiments support our hypothesis that ignoring information leakage can result in significant loss of utility for the defender, and demonstrates the value of our algorithms.

\section{Model of Information Leakage}\label{sec:model}
Consider a standard zero-sum Stackelberg security game with a defender and an attacker. The defender allocates $k$ security resources to protect $n$ targets, which are denoted by the set  $[n]=\{1,2,...,n\}$. In this paper we consider the case where the security resources do {\it not} have scheduling constraints. That is, the defender's pure strategy is to protect any subset of $[n]$ of size at most $k$. For any $i \in [n]$, let $r_i$  be the reward [$c_i$ be the cost] of the defender when the attacked target $i$ is protected [unprotected]. We consider zero-sum games, therefore the attacker's utility is the negation of the defender's utility. Let $s$ denote a pure strategy and $S$ be the set of all possible pure strategies. With some abuse of notation, we sometimes regard $s$ as a \emph{subset} of $[n]$ denoting the protected targets; and sometimes view it as an $n$-dimensional $0-1$ \emph{vector} with $k$ $1$'s specifying the protected targets. The intended interpretation should be clear from context.   The {\it support} of a mixed strategy is defined to be the set of pure strategies with non-zero probabilities. 
Without information leakage, the problem of computing the defender's optimal mixed strategy can be compactly formulated as linear program \eqref{lp:approxOPT} with each variable $x_i$ as the marginal probability of covering target $i$. The resulting marginal vector $\vec{x}$  is a convex combination of the indicator vectors of pure strategies, and a mixed strategy with small support can be efficiently sampled, e.g., by Comb Sampling \cite{Tsai10a}.
\begin{lp} \label{lp:approxOPT} 
	\maxi{u} 
	\st
	\qcon{u\leq r_i x_{i} +c_i (1-x_{i})}{i\in[n]}
	\con{\sum_{i\in[n]} x_{i} \leq k}
	\qcon{0 \leq x_{i} \leq 1}{i \in [n]} 
\end{lp}

Building on this basic security game, our model goes one step further and considers the possibility that the protection status of one  target leaks to the attacker. Here, by ``protection status" we mean whether this target is protected or not in an {\it instantiation} of the mixed strategy. We consider two related models of information leakage:
\begin{enumerate}
	\item {\bf PR}obabilistic {\bf I}nformation {\bf I}eakage (PRIL): with probability $p_i(\geq 0)$ a \emph{single} target $i$ leaks information; and  with probability $p_0 =1-\sum_{i=1}^n p_i$  no targets leak information.  So we have  $\vec{p}=(p_0,p_1,...,p_n)\in \Delta_{n+1}$ where $\Delta_{n+1}$ is the $(n+1)$-dimensional simplex. In practice, $\vec{p}$ is usually given by domain experts and may be determined by the nature or property of targets. 
	%Some special cases include $p_i = \frac{1}{k}$ for any $i$, i.e., a uniform randomly chosen target leaks information, or $p_i = 1$ for some $i$, i.e., information leaks from a certain known target.
	\item {\bf AD}versarial {\bf I}nformation {\bf L}eakage (ADIL): with probability $1-p_0$, one {\em adversarially} chosen target leaks  information. This model captures the case where the attacker will strategically choose a target for surveillance and with certain probability he succeeds in observing the protection status of the surveyed target.   
\end{enumerate}   

%These two models are closely related. In particular, ADIL is an PRIL model with the worst $\vec{p}$, and unsurprisingly, as we will show later, they are also computationally related. 

Given either model -- PRIL with any $\vec{p}\in \Delta_{n+1}$ or ADIL -- we are interested in computing the optimal defender patrolling strategy. The first question to ask is: why does the issue of information leakage matter and how does it affect the computation of the optimal defender strategy? To answer this question we employ a concrete example.

Consider a zero-sum security game with $4$ targets and $2$ resources. The profiles of reward $r_i$ [cost $c_i$] is $\vec{r}=(1,1,2,2)$ [$\vec{c}=(-2,-2,-1,-1)$], where the coordinates are indexed by target ids. 
If there is no information leakage, it is easy to see that the optimal marginal coverage is $\vec{x}=(\frac{2}{3},\frac{2}{3},\frac{1}{3},\frac{1}{3})$. The attacker will attack an arbitrary target, resulting in a defender utility of $0$. Now, let us consider a simple case of information leakage. Assume the attacker observes whether target $1$ is protected or not in any instantiation of the mixed strategy, i.e., $p_1=1$. As we will argue, how the marginal probability $\vec{x}$ is implemented would matter now. One way to implement $\vec{x}$ is to protect target $\{1,2\}$  with probability $\frac{2}{3}$ and protect $\{3,4\}$  with probability $\frac{1}{3}$. However, this implementation is  ``fragile" in the presence of the above information leakage. In particular, if the attacker observes that target $1$ is protected (which occurs with probability $\frac{2}{3}$), he infers that the defender is protecting target $\{1,2\}$ and will attack $3$ or $4$, resulting in a defender utility of $-1$; if target $1$ is not protected, the attacker will just attack,  resulting in a defender utility of $-2$. Therefore, the defender gets expected utility $-\frac{4}{3}$. 

Now consider another way to implement the {\it same} marginal $\vec{x}$ by the following mixed strategy:

\begin{table}[H]
\centering{}
\begin{tabular}{cccccc}
	\hline 
	$\{1,2\}$ & $\{1,3\}$ & $\{1,4\}$ & $\{2,3\}$ & $\{2,4\}$ & $\{3,4\}$\tabularnewline 
	\hline 
	$10/27$ & $4/27$ & $4/27$ & $4/27$ & $4/27$ & $1/27$\tabularnewline
	\hline 
\end{tabular}
\end{table}
If the attacker observes that target $1$ is protected (which occurs with probability $\frac{2}{3}$), then he infers that target $2$ is protected with probability $\frac{\frac{10}{27}}{\frac{10}{27}+\frac{4}{27}+\frac{4}{27}}=\frac{5}{9}$, and target $3,4$ are both protected with probability $\frac{2}{9}$. Some calculation shows that the attacker will have the same utility $\frac{1}{3}$ on target $2,3,4$ and thus will choose an arbitrary one to attack, resulting in a defender utility of $-\frac{1}{3}$. On the other hand, if target $1$ is observed to be unprotected, the defender gets utility $-2$. In expectation, the defender gets utility $\frac{2}{3}\times (-\frac{1}{3})+\frac{1}{3}\times(-2)=-\frac{8}{9}$. 

As seen above, though implementing the same marginals, the latter mixed strategy achieves better defender utility than the former one in the presence of information leakage. However, is it optimal? It turns out that the following mixed strategy achieves an even better defender utility of $-\frac{1}{3}$, which can be proved to be optimal: %w.r.t. $\vec{p}=(1,0,0,0)$:
protect $\{1,2\}$ with probability $\frac{5}{9}$, $\{1,3\}$ with probability $\frac{2}{9}$ and $\{1,4\}$ with probability $\frac{2}{9}$.  
%%\begin{table}[H]
%%\centering{}%
%\begin{tabular}{|c|c|c|c|c|c|}
%\hline 
%$(1,2)$ & $(1,3)$ & $(1,4)$ \tabularnewline 
%\hline 
%$5/9$ & $2/9$ & $2/9$ \tabularnewline
%\hline 
%\end{tabular}
%\end{table}           

This example shows that compact representation by marginal coverage probabilities is not sufficient for computing the optimal defending strategy assuming information leakage. This naturally raises new computational challenges: how can we formulate the defender's optimization problem and compute the optimal solution? Is there still a compact formulation or is it necessary to enumerate all the  exponentially many pure strategies? What is the computational complexity of this problem? We answer these questions in the next section.

\section{Computing Optimal Defender Strategy}
We will focus on the derivation of the PRIL model. The formulation for the ADIL model is provided at the end of this section since it admits a similar derivation.  Fixing the defender's mixed strategy, let $t_i$ ($\neg t_i$)  denote the event that target $i$ is { \it protected} ({\it unprotected}). For the PRIL model, the defender's utility equals
$$
\begin{array}{c}
DefU = p_0 u+\sum_{i=1}^n p_i (u_i+v_i)
\end{array}
$$
where $u = \min_j \left[r_j \Pr (t_j)+c_j \Pr (\neg  t_j) \right] $ is the defender's utility when there is no information leakage; and 
$$
\begin{array}{c c l}
u_i &=& \Pr (t_i) \times \min_j \left[ r_j \Pr(t_j|t_i)+c_j \Pr(\neg  t_j|t_i) \right] \\
&=&\min_j \left[ r_j \Pr (t_j,t_i)+c_j \Pr (\neg  t_j,t_i) \right]
\end{array}
$$
is the defender's utility when target $i$ leaks out its protection status as $t_i$ (i.e., protected) multiplied by probability $\Pr (t_i)$. Similarly 
$$
\begin{array}{c c l}
v_i &=& \min_j \left[ r_j \Pr (t_j,\neg  t_i)+c_j \Pr (\neg  t_j, \neg  t_i) \right]
\end{array}
$$
is the defender's expected utility multiplied by probability $\Pr (\neg t_i)$  when target $i$ leaks status $\neg t_i$ (i.e., unprotected) 

Define variables $x_{ij}=\Pr (t_i,t_j)$ (setting $x_{ii}=\Pr (t_i)$). Using the fact that $\Pr (t_i,\neg t_j)=x_{ii} - x_{ij}$ and $\Pr (\neg t_i,\neg t_j) = 1-x_{ii}- x_{jj}+x_{ij}$, we obtain the following linear program which computes the defender's optimal patrolling strategy:
\begin{lp}\label{lp:naiveLP}
	\maxi{p_0 u+\sum_{i=1}^n p_i(u_i +v_i)}
	\st
	\qcon{ u\leq r_j x_{jj} +c_j (1-x_{jj}) }{j \in [n]}
	\qcon{u_{i}\leq r_j x_{ij}+c_j (x_{ii}-x_{ij}) }{ i,j \in [n]}
	\qcon{v_{i}\leq r_j (x_{jj}-x_{ij})+ c_j (1-x_{ii}- x_{jj}+x_{ij}) }{ i,j \in [n]}
	\qcon{ x_{ij}=\sum_{s: i,j \in s} \theta_s }{ i,j \in [n]}
	\con{ \sum_{s\in S} \theta_{s} = 1}
	\qcon{\theta_s \geq 0}{  s \in S}
\end{lp}  
where $u,u_i,v_i,x_{ij},\theta_s$  are variables; $s$ denotes a pure strategy and the sum condition ``$s:i,j \in s$" means summing over all the pure strategies that protect both targets $i$ and $j$ (or $i$ if $i=j$); $\theta_s$ denotes the probability of choosing strategy $s$.

Unfortunately, LP~\eqref{lp:naiveLP} suffers from an exponential explosion of variables, specifically, $\theta_s$. From the sake of computational efficiency, one natural idea is to find a compact representation of the defender's mixed strategy. As suggested by LP~\eqref{lp:naiveLP}, the variables $x_{ij}$, indicating the probability that targets $i,j$ are both protected, are sufficient to describe the defender's objective and the attacker's incentive constraints. 

Let us call variables $x_{ij}$ the { \it pair-wise marginals}  and think of them as a matrix $X\in \mathbb{R}^{n\times n}$, i.e., the $i$'th row and $j$'th column of $X$ is $x_{ij}$ (not to be confused with the {\it marginals} $\vec{x}$).  We say $X$ is {\it feasible} if there exists a mixed strategy, i.e., a distribution over pure strategies, that achieves the pair-wise marginals $X$. Clearly, not all $X\in \mathbb{R}^{n\times n}$ are feasible. Let $\mathcal{P}(n,k)\in \mathbb{R}^{n\times n}$ be the set of all {\it feasible} $X$. The following lemma shows a structural property of $\mathcal{P}(n,k)$.
\begin{lemma}\label{lem:polytope}
	$\mathcal{P}(n,k)$ is a polytope and any $X\in \mathcal{P}(n,k)$ is a \emph{symmetric positive semi-definite} (PSD) matrix.
\end{lemma}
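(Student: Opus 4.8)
The plan is to write $\mathcal{P}(n,k)$ explicitly as the convex hull of a finite family of rank-one matrices, after which both assertions are immediate. The starting observation is that if we regard a pure strategy $s \in S$ as its $0$-$1$ indicator vector $\one_s \in \{0,1\}^n$, then the outer product $\one_s \one_s^{\top} \in \mathbb{R}^{n \times n}$ has $(i,j)$-entry equal to $1$ precisely when $i \in s$ and $j \in s$, and $0$ otherwise. Consequently the defining constraint ``$x_{ij} = \sum_{s : i,j \in s} \theta_s$ for all $i,j \in [n]$'' from LP~\eqref{lp:naiveLP} is exactly the single matrix identity $X = \sum_{s \in S} \theta_s\, \one_s \one_s^{\top}$. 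Since $X \in \mathcal{P}(n,k)$ means, by definition, that such a probability vector $\theta$ exists, we get $\mathcal{P}(n,k) = \{ \sum_{s \in S} \theta_s \one_s \one_s^{\top} : \theta_s \ge 0,\ \sum_{s} \theta_s = 1 \}$; that is, $\mathcal{P}(n,k)$ is the convex hull of the finite set $\{ \one_s \one_s^{\top} : s \in S \}$.

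Both claims now follow. First, $S$ is finite ($|S| \le \sum_{j=0}^{k} \binom{n}{j} \le 2^n$), so $\mathcal{P}(n,k)$ is the convex hull of finitely many points of $\mathbb{R}^{n \times n}$, hence a polytope by the Minkowski--Weyl theorem. Second, each generator $\one_s \one_s^{\top}$ is symmetric, and it is positive semidefinite because for every $y \in \mathbb{R}^n$ we have $y^{\top} (\one_s \one_s^{\top}) y = (\one_s^{\top} y)^2 \ge 0$. The symmetric PSD matrices form a convex cone, so in particular they are closed under convex combinations, and therefore any $X = \sum_{s} \theta_s \one_s \one_s^{\top} \in \mathcal{P}(n,k)$ is again symmetric and PSD.

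I do not expect a genuine obstacle here: the argument is a direct unwinding of the definitions, and the only point needing a little care is the bookkeeping that collapses the $n^2$ scalar marginal equations into one matrix equation (together with the remark that ``feasible'' was defined precisely as the existence of the representing distribution, so that $\mathcal{P}(n,k)$ \emph{equals}, rather than is merely contained in, the stated convex hull). It is worth flagging, however, that this argument yields only the vertex description of $\mathcal{P}(n,k)$, with exponentially many generators; it says nothing about whether a compact facet description exists, and the subsequent NP-hardness of the defender oracle strongly suggests that it does not.
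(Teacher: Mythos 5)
Your proof is correct and takes essentially the same route as the paper's: both identify the generators of $\mathcal{P}(n,k)$ as the rank-one matrices $ss^{\top}$ arising from pure strategies $s$ and conclude PSD-ness from closure of symmetric PSD matrices under convex combination. The only cosmetic difference is in the polytope claim, where the paper invokes projection of the polytope in $(X,\vec{\theta})$-space onto the $X$-coordinates while you invoke the convex hull of a finite point set; both are standard and valid.
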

\begin{proof}
	Notice that $X$ is feasible if and only there exists $\theta_s$ for any pure strategy $s$ such that the following linear constraints hold:
	\begin{lp}
		\qcon{ x_{ij}=\sum_{s: i,j \in s} \theta_s }{ i,j \in [n]}
		\con{ \sum_{s \in S} \theta_{s} = 1}
		\qcon{\theta_s \geq 0}{ s \in S}
	\end{lp}
	
	%\begin{eqnarray} 
	%&& x_{ij}=\sum_{s: i,j \in s} \theta_s, \forall i,j\in[n]  \nonumber  \\
	%&& \sum_{s\in S} \theta_{s} = 1  \nonumber  \\
	%&& \theta_s \geq 0 \, \forall s. \nonumber 
	%\end{eqnarray}
	These constraints define a polytope for variables $(X,\vec{\theta})$, therefore its projection to the lower dimension $X$, which is precisely $\mathcal{P}(n,k)$, is also a polytope.
	
	To prove $X\in \mathcal{P}(n,k)$ is PSD, we first observe that any vertex of $\mathcal{P}(n,k)$, characterizing a pure strategy, is PSD. In fact, let $s \in \{ 0,1 \}^n$ be any pure strategy, then the pair-wise marginal w.r.t. $s$ is $X_s = s s^T$, which is PSD. Therefore, any  $X \in \mathcal{P}$, which is a convex combination of its vertices, is also PSD.
\end{proof}

\begin{lp}\label{lp:compactLP}
	\maxi{p_0 u+\sum_{i=1}^n p_i(u_i +v_i)}
	\st
	\qcon{ u\leq r_j x_{jj} +c_j (1-x_{jj}) }{j \in [n]}
	\qcon{u_{i}\leq r_j x_{ij}+c_j (x_{ii}-x_{ij}) }{ i,j \in [n]}
	\qcon{v_{i}\leq r_j (x_{jj}-x_{ij})+ c_j (1-x_{ii}- x_{jj}+x_{ij}) }{ i,j \in [n]}
	\con{ X \in \mathcal{P}(n,k) }
\end{lp} 

With Lemma~\ref{lem:polytope}, we may re-write LP~\eqref{lp:naiveLP} compactly as LP~\eqref{lp:compactLP} with variables $u$, $u_i$, $v_i$ and $X$. Therefore,  we would be able to compute the optimal strategy efficiently in polynomial time if the constraints determining the polytope $\mathcal{P}(n,k)$ were only polynomially many -- recall that this is the approach we took with LP~\eqref{lp:approxOPT} in the case of no information leakage. However, perhaps surprisingly, the problem turns out to be  much harder in the presence of leakage.
\begin{lemma} \label{lem:optNP-hard}
	Optimizing over $\mathcal{P}(n,k)$ is NP-hard.
\end{lemma}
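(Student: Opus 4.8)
The plan is to reduce linear optimization over $\mathcal{P}(n,k)$ to cardinality-constrained quadratic binary maximization, whose NP-hardness is classical. By Lemma~\ref{lem:polytope}, $\mathcal{P}(n,k)$ is a polytope, and its proof in fact exhibits $\mathcal{P}(n,k)$ as the convex hull of $\{ ss^T : s \text{ a pure strategy} \}$; moreover each generator $ss^T$ is a $0$--$1$ matrix lying in $\mathcal{P}(n,k)\subseteq[0,1]^{n\times n}$, hence an extreme point, so the vertex set of $\mathcal{P}(n,k)$ is exactly $\{ss^T : s\in\{0,1\}^n,\ \sum_i s_i \le k\}$. Consequently, for any matrix $C\in\RR^{n\times n}$, maximizing the linear objective $\langle C,X\rangle=\sum_{i,j}C_{ij}x_{ij}$ over $X\in\mathcal{P}(n,k)$ is attained at some generator $ss^T$ and equals $\max\{\, s^T C s : s\in\{0,1\}^n,\ \sum_i s_i\le k\,\}$. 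So linear optimization over $\mathcal{P}(n,k)$ is precisely the problem of maximizing a quadratic form over $0$--$1$ vectors with at most $k$ ones.

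Next I would exhibit an NP-hard special case of this quadratic problem by reducing from \textsc{Clique}: given a graph $G=([n],E)$ and an integer $k$, decide whether $G$ contains a clique on $k$ vertices. Take $C=A$, the adjacency matrix of $G$ with zero diagonal, and use the polytope $\mathcal{P}(n,k)$ with the same $n$ and $k$. For a pure strategy $s$ with support $T\subseteq[n]$ (so $|T|\le k$), we have $s^T A s = 2\,|\{\text{edges of }G\text{ inside }T\}|$. Since a set of $m\le k$ vertices spans at most $\binom{m}{2}\le\binom{k}{2}$ edges, $s^T A s\le k(k-1)$ always, with equality if and only if $T$ induces a complete graph on exactly $k$ vertices. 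Hence $\max_{X\in\mathcal{P}(n,k)}\langle A,X\rangle=k(k-1)$ iff $G$ has a $k$-clique, and it is strictly smaller otherwise.

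Therefore a polynomial-time algorithm for optimizing a linear objective over $\mathcal{P}(n,k)$ would decide \textsc{Clique} in polynomial time, so the problem is NP-hard. The only points requiring minor care are (i) confirming that the linear optimum over $\mathcal{P}(n,k)$ coincides with the combinatorial maximum over pure strategies, which follows because $\mathcal{P}(n,k)$ is the convex hull of the matrices $ss^T$; and (ii) bridging ``at most $k$'' (pure strategies) with ``exactly $k$'' (clique size), handled by monotonicity of $m\mapsto\binom{m}{2}$. Beyond spotting the connection to quadratic optimization there is no substantial obstacle; one can equally route the reduction through densest-$k$-subgraph or maximum independent set, but \textsc{Clique} gives the cleanest exact hardness.
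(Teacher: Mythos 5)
Your proof is correct and follows essentially the same route as the paper: both identify linear optimization over $\mathcal{P}(n,k)$ with maximizing $s^T A s$ over pure strategies (since the polytope is the convex hull of the matrices $ss^T$) and then invoke a hard graph problem on the adjacency matrix. The only difference is that you reduce from \textsc{Clique} (with the clean threshold $k(k-1)$ and an explicit treatment of the ``at most $k$'' versus ``exactly $k$'' issue) whereas the paper reduces from densest $k$-subgraph; both are valid.
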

\begin{proof}
	We prove by reduction from the densest $k$-subgraph problem. Given any graph instance $G=(V,E)$,
	let $A$ be the adjacency matrix of $G$. Consider the following
	linear program:
	\begin{lp} \label{lp:densest}
		\maxi{\sum_{i,j\in [n]} A_{ij}x_{ij}} 
		\st
		\con{ X\in \mathcal{P}(n,k).}
	\end{lp}
	This linear program must have a {\it vertex} optimal solution $X^*$ which satisfies $X^* = s s^T$ for some pure strategy $s\in \{0,1\}^n$. Therefore, the linear objective satisfies
	 \begin{equation*}
	\sum_{i,j\in [n]} A_{ij}x_{ij} = tr(A X^*) = tr(A \times ss^T) = tr(s^T A s) = s^T A s.
	\end{equation*}
	Notice that $s^T A s/2k$ equals the density of a subgraph of $G$ with $k$ nodes indicated by $s$. Since $X^*$ is the optimal solution to LP~\eqref{lp:densest}, it also maximizes the density $s^T A s/2k$ over all subgraphs with $k$ nodes. In other words, the ability of optimizing  LP~\eqref{lp:densest} implies the ability of computing the densest $k$-subgraph, which is NP-hard. Therefore, optimizing over  $\mathcal{P}(n,k)$ is NP-hard. 
\end{proof}

 Lemma~\ref{lem:optNP-hard} suggests that there is no hope of finding polynomially many linear constraints which determine $\mathcal{P}(n,k)$ or, more generally, an efficient separation oracle for $\mathcal{P}(n,k)$, assuming $P\not=NP$. In fact,  $\mathcal{P}(n, k)$ is closely related to a fundamental geometric object, known as the {\it correlation polytope}, which has applications in quantum mechanics, statistics, machine learning and combinatorial problems. We show a connection between $\mathcal{P}(n,k)$ and the correlation polytope in Appendix B. For further information, we refer the reader to \cite{Corpoly}.

Another approach for computing the optimal defender strategy is to use the technique of column generation, which is a master/slave decomposition of an optimization problem. The essential part of this approach is the slave problem, which is also called  the ``defender best response oracle" or ``defender oracle" for short \cite{Jain10}. We defer the derivation of the defender oracle to Appendix A, while only mention that a similar reduction as in the proof of Lemma~\ref{lem:optNP-hard} also implies the follows.
\begin{lemma}
	The defender oracle is  NP-hard.
\end{lemma}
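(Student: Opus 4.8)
The plan is to mimic the reduction from the densest $k$-subgraph problem used in the proof of Lemma~\ref{lem:optNP-hard}, adapting it to whatever precise form the defender oracle takes. First I would recall (from the column-generation setup deferred to Appendix~A) that the defender oracle is the slave problem: given dual prices from the restricted master LP, it must find the pure strategy $s \in S$ (equivalently, a $0$-$1$ vector with at most $k$ ones) that maximizes the reduced cost, i.e.\ some linear functional of the induced pair-wise marginal matrix $X_s = s s^T$ together with the linear marginal $s$. Because the constraints of LP~\eqref{lp:naiveLP} involve the variables $x_{ij}$ linearly through the coefficients $r_j, c_j$, the dual prices multiply the entries $x_{ij}$, so the reduced cost of strategy $s$ takes the form $\sum_{i,j} w_{ij} (s_i s_j) + \sum_i b_i s_i + \text{const}$ for some weights $w_{ij}$ and $b_i$ determined by the current dual solution. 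Maximizing this over $s$ with $\sum_i s_i \le k$ is exactly a quadratic $0$-$1$ optimization problem subject to a cardinality constraint.

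Second, I would show that the densest $k$-subgraph problem reduces to this oracle. Given a graph $G = (V,E)$ with adjacency matrix $A$, set $w_{ij} = A_{ij}$ (so $\sum_{i,j} w_{ij} s_i s_j = s^\top A s$, which is $2k$ times the density of the subgraph induced by the support of $s$ when $|s| = k$), set $b_i = 0$, and set the other game parameters ($\vec r, \vec c, \vec p$, the cardinality bound $k$) so that the dual prices arising in some reachable master iteration realize exactly these weights. Alternatively — and this is cleaner — I would observe that the oracle must in particular be able to solve $\max_{s} \, \mathrm{tr}(A\, s s^\top) = \max_s s^\top A s$ over pure strategies of size at most $k$, since this is a special case of the reduced-cost maximization (it is even the same object that appeared in LP~\eqref{lp:densest}). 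As argued there, an exact maximizer of $s^\top A s$ over $k$-subsets yields the densest $k$-subgraph, which is NP-hard; and one can dispense with the "at most $k$" vs.\ "exactly $k$" subtlety because $A$ has nonnegative entries, so an optimal solution can be taken to have exactly $k$ ones (or one pads with isolated vertices).

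The main obstacle I anticipate is bookkeeping rather than conceptual: I must verify that the weight pattern $w_{ij} = A_{ij}$, $b_i = 0$ actually arises as (or can be absorbed into) a legitimate reduced-cost vector for LP~\eqref{lp:naiveLP} — that is, that the dual variables of the $u_i$ and $v_i$ constraints can be chosen to produce arbitrary symmetric weights on the $x_{ij}$, up to the additive linear-in-$s$ and constant terms that do not affect which $k$-subset is optimal. This amounts to checking that the linear map from dual multipliers to coefficients on $\{x_{ij}\}$ is surjective onto (a translate of) the relevant coefficient space, which follows from inspecting the coefficient of $x_{ij}$ across the three families of constraints in LP~\eqref{lp:naiveLP}. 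Since the statement only claims NP-hardness (a single hard family of instances suffices), I do not need full surjectivity — just one instance family encoding densest $k$-subgraph — so in the write-up I would keep the argument short: exhibit the instances, identify the oracle call that computes $\max_{|s| \le k} s^\top A s$, and invoke the NP-hardness of densest $k$-subgraph exactly as in Lemma~\ref{lem:optNP-hard}, noting that the reduction "is similar" as the text already promises.
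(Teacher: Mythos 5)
Your proposal is correct and takes essentially the same route as the paper: Appendix~A defines the defender oracle as the standalone problem of maximizing $s^T M s$ over pure strategies $s$ for a given symmetric matrix $M$ of dual prices, so the hardness follows by feeding in the adjacency matrix $A$ and invoking the densest $k$-subgraph reduction exactly as in Lemma~\ref{lem:optNP-hard}. Your worry about whether $w_{ij}=A_{ij}$ is realizable as an actual reduced-cost vector in a reachable master iteration is unnecessary under the paper's definition of the oracle (which takes an arbitrary input matrix), and you correctly identify the "cleaner" version of the argument as sufficient.
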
 

%The defender oracle problem for LP~\ref{lp:naiveLP} is as follows: given any matrix $A\in R^{n\times n}$, the defender oracle problem is to find a pure strategy $s$ such that $s^T A s$ is maximized. In other words, the defender oracle identifies the size-$k$ principle submatrix of $A$ with maximum entry sum. Similar to the proof of Lemma~\ref{lem:optNP-hard}, the following lemma is also by reduction from $k$-densest subgraph.
%\begin{lemma}
%The defender oracle problem is NP-hard.
%\end{lemma}

By now, we have shown the evidence of the difficulty of solving LP~\eqref{lp:naiveLP} using either marginals or the technique of column generation. For the ADIL model, a similar argument yields that the following LP formulation computes the optimal defender strategy. It is easy to check that it shares the same marginals and defender oracle as the PRIL model.
\begin{lp}\label{lp:reducedADLP}
	\maxi{p_0 u+ (1-p_0)w}
	\st
	\qcon{ u\leq r_j x_{jj} +c_j (1-x_{jj}) }{j \in [n]}
	\qcon{u_{i}\leq r_j x_{ij}+c_j (x_{ii}-x_{ij}) }{ i,j \in [n]}
	\qcon{v_{i}\leq r_j (x_{jj}-x_{ij})+ c_j (1-x_{ii}- x_{jj}+x_{ij}) }{ i,j \in [n]}
	\qcon{w \leq u_i + v_i}{i\in[n]}
	\con{ X \in \mathcal{P}(n,k) }
\end{lp}
where variable $w$ is the defender's expected utility when an adversarially chosen target is observed by the attacker.

\subsection{Leakage from Small Support of Targets}\label{subsec:smallSupport}
Despite the hardness results for the general case, we show that the defender oracle admits a polynomial time algorithm if  information only  leaks from a small subset of targets; we call this set the {\it leakage support}.  By re-ordering the targets, we may assume without loss of generality that only the first $m$ targets, denoted by set $[m]$, could possibly leak information in both the PRIL and ADIL model. For the PRIL model, this means $p_i=0 $ for any $i>m$ and for the ADIL model, this means the attacker only chooses a target in $[m]$ for surveillance.

Why does this make the problem tractable? Intuitively the reason is as follows: when information leaks from a small set of targets, we only need to consider the correlations between these leaking targets and others, which is a much smaller set of variables than in LP~\eqref{lp:naiveLP} or \eqref{lp:reducedADLP}. Restricted to a leakage support of size $m$, the defender oracle is the following problem (See Appendix A for the derivation).  Let $A$ be a {\it symmetric} matrix of the following block form 
\begin{equation}\label{eqn:A}
	A: \, \left[\begin{array}{cc}
		A_{mm} & A_{mm'}\\
		A_{m'm} & A_{m'm'}
	\end{array}\right]
\end{equation}
where $m' = n-m$; $A_{mm'}\in \mathbb{R}^{m \times m'}$  for any integers $m,m'$ is a sub-matrix  and, crucially, $A_{m'm'}$ is a {\it diagonal} matrix.
Given $A$ of form \eqref{eqn:A}, find a pure strategy $s$ such that $s^TA s$ is maximized. That is, the defender oracle identifies the size-$k$ principle submatrix with maximum entry sum for any $A$ of form ~\eqref{eqn:A}. Note that $m=n$ in general case. 

Before detailing the algorithm, we first describe some notation. Let $A[i,:]$ be the $i$'th row of matrix $A$ and $diag(A)$ be the vector consisting of the diagonal entries of $A$. For any subset $C_1,C_2$ of $[n]$, let $A_{C_1,C_2}$ be  the submatrix of $A$ consisting of rows in $C_1$ and columns in $C_2$, and   $sum(A_{C_1,C_2})=\sum_{i\in C_1, j \in C_2} A_{ij}$  be the entry sum of $A_{C_1,C_2}$. The following lemma shows that Algorithm~\ref{alg:DefenderOracle} solves the defender oracle. Our main insight is that for a pure strategy $s$ to be optimal, once the set $C=s \cap [m]$ is decided, its complement $\bar{C} = s\setminus C$ can be explicitly identified, therefore we can simply brute-force search to find the best $C\subseteq [m]$. Lemma~\ref{lem:polyOracle} provides the algorithm guarantee, which then yields the polynomial solvability for the case of small $m$ (Theorem~\ref{thm:smallSupport}).
\begin{lemma}\label{lem:polyOracle}
	Let $m$ be the size of the leakage support. Algorithm~\ref{alg:DefenderOracle} solves the defender oracle and runs in  $ poly(n,k,2^m)$ time. In particular, the defender oracle admits a $poly(n,k)$ time algorithm if $m$ is a constant. 
\end{lemma}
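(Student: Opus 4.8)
The plan is to prove both correctness and the running-time bound for Algorithm~\ref{alg:DefenderOracle} by exploiting the one structural feature of $A$ we are given: the block $A_{m'm'}$ is diagonal. The payoff of this assumption is that, once we commit to which of the (few) leaking targets are protected, the best way to spend the remaining resources on non-leaking targets decouples into independent single-target decisions. Concretely, fix a candidate set $C\subseteq[m]$ of protected leaking targets with $|C|\le k$ and let $s=C\cup\bar C$ be any pure strategy with $\bar C=s\setminus[m]\subseteq[n]\setminus[m]$. Expanding $s^TAs=\sum_{i,j\in s}A_{ij}$ and using symmetry of $A$ together with $A_{jj'}=0$ for distinct $j,j'\in[n]\setminus[m]$, I would derive
\[
s^TAs \;=\; sum(A_{C,C}) \;+\; \sum_{j\in\bar C}\Bigl(A_{jj}+2\textstyle\sum_{i\in C}A_{ij}\Bigr).
\]
Writing $w_j(C):=A_{jj}+2\sum_{i\in C}A_{ij}$ for $j\notin[m]$, the key point is that $w_j(C)$ does not depend on which other non-leaking targets are selected.

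It follows that, with $C$ fixed, maximizing $s^TAs$ over pure strategies with $s\cap[m]=C$ amounts to picking $\bar C\subseteq[n]\setminus[m]$ of the allowed size that maximizes $\sum_{j\in\bar C}w_j(C)$; this is done by sorting and taking the $k-|C|$ largest values $w_j(C)$ (retaining only the strictly positive ones if the support may have size below $k$), which is exactly the completion rule used by Algorithm~\ref{alg:DefenderOracle}. Hence for every $C$ the algorithm returns $\max\{s^TAs:\ s\cap[m]=C,\ s\text{ valid}\}$. Since any optimal pure strategy $s^*$ has $s^*\cap[m]=C^*$ for some $C^*\subseteq[m]$ with $|C^*|\le k$, and the algorithm enumerates all such $C$, when it reaches $C=C^*$ it produces a valid strategy of value at least $s^{*T}As^*$; as every strategy it evaluates is valid, the best one returned is optimal. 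This shows the algorithm solves the defender oracle.

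For the running time, there are at most $\sum_{\ell=0}^{\min(m,k)}\binom{m}{\ell}\le 2^m$ choices of $C$. For each, computing all $w_j(C)$ costs $O(mn)$ time, selecting the top $k-|C|$ of them costs $O(n\log n)$ time, and evaluating $s^TAs$ for the resulting $s$ costs $O(k^2)$ time, so the total is $2^m\cdot\poly(n,k)$; when $m=O(1)$ this is $\poly(n,k)$, which is the stated bound (and, combined with column generation on LP~\eqref{lp:naiveLP}, yields Theorem~\ref{thm:smallSupport}).

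The main obstacle is essentially just the decoupling identity above: it is the step that uses the diagonal structure of $A_{m'm'}$, and one must verify both that the cross-terms among non-leaking targets vanish and that greedy-by-$w_j(C)$ is optimal for a \emph{fixed} $C$ (the latter being immediate, as it is the maximization of a sum of per-target values under a cardinality bound). Beyond that, the only care needed is to keep the cardinality constraint on $s$ consistent with how the defender oracle is normalized in Appendix~A (exactly $k$ versus at most $k$ protected targets, with the corresponding handling of negative $w_j(C)$); everything else is bookkeeping.
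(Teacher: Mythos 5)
Your proposal is correct and follows essentially the same route as the paper's proof: your decoupling identity with $w_j(C)=A_{jj}+2\sum_{i\in C}A_{ij}$ is exactly the paper's vector $\vec v = 2\sum_{i\in C}A[i,:]+diag(A)$ restricted to $j>m$, and the greedy completion by the top $k-|C|$ values plus brute-force enumeration over $C\subseteq[m]$ is precisely Algorithm~\ref{alg:DefenderOracle}. The only difference is cosmetic (you argue ``for each fixed $C$ the algorithm attains the conditional optimum'' whereas the paper argues ``the optimal $s$'s completion must be the greedy one''), and your side remark about exactly-$k$ versus at-most-$k$ strategies is a normalization detail the paper glosses over as well.
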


\begin{proof}
	First, it is easy to see that Algorithm 1 runs in $poly(2^m ,n,k)$ time since the for-loop is executed at most $2^m$ times. We show that it solves the defender oracle problem.
	
	Let $s$ denote the indices of the principle submatrix of $A$ with maximum entry sum. Notice that $s$ can also be viewed as a pure strategy. Let $C = s \cap [m]$ and $\bar{C} = s \setminus C$. We claim that, given $C$, $\bar{C}$ must be the set of indices of the largest $k-|C|$ values from the set $\{ v_{m+1},...,v_{n} \}$, where $\vec{v}$ is defined as $\vec{v} = 2 \sum_{i\in C} A[i,:]+ diag(A)$. In other words, if we know $C$, the set $\bar{C}$ can be easily identified. To prove the claim, we re-write the $sum(A_{s,s})$ as follows:
	\begin{eqnarray*}
		& & sum(A_{s,s})\\
		&=&  sum(A_{C,C}) + 2sum(A_{C,\bar{C}}) + sum(A_{\bar{C},\bar{C}}) \\
		&=& sum(A_{C,C}) + 2sum(A_{C,\bar{C}}) + sum(diag(A_{\bar{C},\bar{C}})) \\
		&=& sum(A_{C,C}) + sum(2 \sum_{i \in C} A_{i,\bar{C}}+diag(A_{\bar{C},\bar{C}})) \\
		&=& sum(A_{C,C}) + sum( v_{\bar{C}}) \\
		& = & val_C
	\end{eqnarray*}
	where $\vec{v} = 2\sum_{i\in C} A[i,:]+  diag(A)$ and $v_{\bar{C}}$ is the sub-vector of $v$ with indices in $\bar{C}$. Given $C$, $sum(A_{C,C})$ is fixed, therefore $\bar{C}$ must be the set of indices of the largest $k-|C|$ elements from $\{ v_{m+1},...,v_{n} \}$. Algorithm 1 then loop over all the possible $C \subseteq [m]$ ($2^m$ many ) and identifies the optimal one, i.e., the one achieving the maximum $val_C$. 
\end{proof}

\begin{algorithm}
	\begin{algorithmic}[1]
	\REQUIRE matrix $A$ of form \eqref{eqn:A}.		
	\ENSURE  a pure strategy $s$.
	\vspace{1mm}		
	\FOR{all $C\subseteq [m]$ constrained by $|C|\leq k$}
	\STATE	$\vec{v} = 2\sum_{i\in C} A[i,:]+ diag(A)$; %\\\label{c:(1)}
	\STATE	Choose the largest $k-|C|$ values from the set $\{v_{m+1},...,v_n\}$, and denote the set of their indices as $\bar{C}$; %\\\label{c:(2)}
	\STATE	Set $val_C = sum(A_{C,C})+sum(v_{\bar{C}})$; %\\\label{c:(3)}
	\ENDFOR	
	\RETURN the pure strategy $s = C \cup \bar{C} $ with maximum $val_C$.
	\end{algorithmic}
	\caption{Defender Oracle}
	\label{alg:DefenderOracle}
\end{algorithm}

%\begin{algorithm}[t!]
%\label{alg:minimax}
%\caption{CONQUER Outline}
%\begin{footnotesize}
%Initialize $S =\! \phi, ub = \infty, lb = 0$ ; \\
%Randomly generate $\left(\vect{x'}\!, \vect{R^a}\!, \vect{P^a}\right)$, $S \!=\! S\! 
%\cup\! \left\{\vect{x'}\!, (\vect{R^a}\!, \vect{P^a})\right\}$; \\\label{c:(0)}
%\While {$ub > lb$} {
%       Call PALMS to compute relaxed $\MMR$ w.r.t $S$. Let $\vect{x}^*$ be its optimal solution with objective value $lb$;\\\label{c:(1)}
%       Call REALMS to compute $\MR(\vect{x}^*\!\!, \vect{I})$. 
%       Let the optimal solution be $\left(\vect{x'^{,*}}\!\!, \vect{R^{a,*}}\!\!, \vect{P^{a, *}}\right)$ with objective value $ub$;\\\label{c:(2)}
%              $S = S \cup \left\{\vect{x'^{,*}}, \vect{R^{a,*}}, \vect{P^{a, *}}\right\}$;\\\label{c:(3)}
%       }
%       return $(lb, \vect{x}^*)$;
%       \end{footnotesize}
%       \end{algorithm}

\begin{theorem}\label{thm:smallSupport}
	({\bf Polynomial Solvability}) There is an efficient $poly(n,k)$ time algorithm which computes the optimal defender strategy in the PRIL and ADIL model, if the size of the leakage support $m$ is a constant.
\end{theorem}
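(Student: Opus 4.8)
The plan is to solve, in polynomial time, the compact program LP~\eqref{lp:compactLP} for the PRIL model (and LP~\eqref{lp:reducedADLP} for ADIL) and then to convert the optimal pair-wise marginal matrix it returns into an explicit mixed strategy of polynomial support size. The only obstruction to attacking LP~\eqref{lp:compactLP} directly is the membership constraint $X\in\mathcal{P}(n,k)$, for which — by Lemma~\ref{lem:optNP-hard} — one cannot hope for a polynomially separable description in general; the role of the assumption that the leakage support $m$ is constant is precisely to make the particular linear functionals over $\mathcal{P}(n,k)$ that arise in this program tractable.

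First I would pass to the restricted program. Since $p_i=0$ for every $i>m$ in PRIL (and the attacker never surveils a target outside $[m]$ in ADIL), the variables $u_i,v_i$ with $i>m$ and their incentive constraints contribute nothing and may be deleted; what is left are the variables $u$, $\{u_i,v_i:i\in[m]\}$, and the entries of $X$ in positions $\{(j,j):j\in[n]\}\cup\{(i,j):i\in[m],j\in[n]\}$, subject to $O(mn)$ explicit linear inequalities together with the requirement that this truncated matrix $\hat{X}$ lie in the projection $\hat{\mathcal{P}}_m(n,k)$ of $\mathcal{P}(n,k)$ onto those coordinates. A linear functional on such truncated matrices, lifted back to a symmetric matrix $A$, is exactly of the block form~\eqref{eqn:A} with $A_{m'm'}$ diagonal; and by the proof of Lemma~\ref{lem:polytope} the vertices of $\hat{\mathcal{P}}_m(n,k)$ are the truncations of $ss^T$ over pure strategies $s$. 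Hence maximizing any linear functional over $\hat{\mathcal{P}}_m(n,k)$ equals $\max_s s^T A s$ for the corresponding $A$ of form~\eqref{eqn:A}, which Algorithm~\ref{alg:DefenderOracle} computes in $\poly(n,k,2^m)=\poly(n,k)$ time by Lemma~\ref{lem:polyOracle}. In other words, $\hat{\mathcal{P}}_m(n,k)$ — a rational polytope of polynomial dimension sitting inside the unit cube — comes equipped with a polynomial-time linear optimization oracle.

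Given that, I would invoke the Gr\"otschel--Lov\'asz--Schrijver equivalence of optimization and separation to turn this optimization oracle into a polynomial-time separation oracle for $\hat{\mathcal{P}}_m(n,k)$, and then run the ellipsoid method on the restricted program, with the $O(mn)$ explicit inequalities handled trivially and membership in $\hat{\mathcal{P}}_m(n,k)$ resolved by the separation oracle; this produces the optimal objective value and an optimal $\hat{X}^\star\in\hat{\mathcal{P}}_m(n,k)$ in polynomial time. Equivalently, one may run the ellipsoid method on the dual of the $\theta$-indexed master LP~\eqref{lp:naiveLP}, where Algorithm~\ref{alg:DefenderOracle} is directly the separation oracle — the column-generation route referenced earlier, made polynomial by the ellipsoid guarantee. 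Finally, to obtain a deployable mixed strategy, I would call the oracle once more in a Carath\'eodory/column-generation loop: repeatedly run Algorithm~\ref{alg:DefenderOracle} to extract pure strategies $s_1,\dots,s_\ell$ with $\ell=O(mn)$ whose truncated rank-one matrices $\widehat{s_t s_t^T}$ have $\hat{X}^\star$ in their convex hull, then solve the small LP for the mixing weights; the resulting distribution over at most $O(mn)$ pure strategies realizes $\hat{X}^\star$ and hence the optimal defender utility. The ADIL case is verbatim the same using LP~\eqref{lp:reducedADLP}, which has the same marginal variables and the same oracle.

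The step I expect to be the main obstacle is the ellipsoid bookkeeping in the previous paragraph: making the optimization-to-separation reduction rigorous for a polytope accessed only through an oracle requires a priori bounds on the bit-size of vertices (here benign, since the vertices are truncations of $0/1$ matrices) and, more annoyingly, care about $\hat{\mathcal{P}}_m(n,k)$ possibly failing to be full-dimensional, so that one must work with its affine hull or use the standard perturbation/rounding machinery, and one must check that all intermediate numbers stay polynomially bounded so that each ellipsoid iteration and the whole reconstruction loop run in polynomial time. The genuinely combinatorial content — that the defender oracle is computable for constant $m$ — is already supplied by Lemma~\ref{lem:polyOracle}, so what remains is the standard but careful polynomial-time plumbing around it.
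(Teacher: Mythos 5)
Your proposal is correct, and it follows the route the paper intends: restrict attention to the pair-wise marginals involving the leakage support, observe that linear optimization over the resulting projected polytope is exactly the defender oracle for a matrix of form~\eqref{eqn:A}, and leverage Algorithm~\ref{alg:DefenderOracle} (Lemma~\ref{lem:polyOracle}) to solve the restricted LP and recover a small-support mixed strategy. The one place where you go beyond what the paper writes down is also the place where the paper is actually incomplete: the text derives the theorem from the oracle via column generation, yet Appendix~A concedes that ``column generation does not guarantee polynomial convergence.'' Your explicit appeal to the Gr\"otschel--Lov\'asz--Schrijver equivalence of optimization and separation (equivalently, running the ellipsoid method on the dual of the $\theta$-indexed LP with Algorithm~\ref{alg:DefenderOracle} as the separation oracle) is precisely the standard machinery needed to turn the oracle into a genuine $\poly(n,k)$ worst-case guarantee, and your remarks about vertex bit-size (benign, since vertices are truncations of $0/1$ matrices) and possible non-full-dimensionality identify the only technical bookkeeping required. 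One small nicety: you do not need the projected polytope's vertices to be exactly the truncations of $ss^T$; it suffices that it is their convex hull, so a linear functional attains its maximum at one of them. In short, the combinatorial content matches the paper's, and your write-up supplies the rigor the paper leaves implicit.
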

%The proof follows from the general reduction in \cite{} {\color{red} cite} and Lemma~\ref{lem:polyOracle}.

\subsection{An Approximation Algorithm}\label{sec:approx}
We now consider approximation algorithms. Recall that information leakage is due to the correlation between targets, thus one natural way to minimize leakage is to allocate each resource {\it independently} with certain distributions. Naturally, the normalized marginal $\vec{x}^{*} / k$ becomes a choice, where $\vec{x}^{*}$ is the  solution to LP~\eqref{lp:approxOPT}. To avoid the waste of using multiple resources to protect the same target, we sample without replacement. Formally, the {\it independent sampling without replacement} algorithm proceeds as follows: 1. compute the optimal solution $\vec{x}^*$ of LP~\eqref{lp:approxOPT}; 2. independently sample $k$ elements from $[n]$ { \it without replacement} using distribution $\vec{x}^*/k$.
%An explicit description of the algorithm is in Algorithm~\ref{alg:indepSample}.

Zero-sum games exhibit negative utilities, therefore an approximation ratio in terms of utility is not meaningful. To analyze the performance of this algorithm %Algorithm~\ref{alg:indepSample}, 
we shift all the payoffs by a constant, $-\min_{i} c_i$, and get an equivalent constant-sum game with all non-negative payoffs. Theorem~\ref{thm:uniformApprox} shows that this algorithm is ``almost'' a $(1-\frac{1}{e})- approximation$ to the optimal solution in the PRIL model, assuming information leaks out from any target $i$ with equal probability $p_i = \frac{1-p_0}{n}$. We note that proving a general approximation ratio for any $\vec{p}\in \Delta_{n+1}$ turns out to be very challenging, intuitively because the optimal strategy adjusts according to different $\vec{p}$ while the sampling algorithm does not depend on $\vec{p}$. However, experiments empirically show that the ratio does not vary much for different $\vec{p}$ on average (see Section~\ref{sec:exper}). 
% \begin{algorithm}
% \begin{algorithmic}
% \STATE 1. Compute the optimal solution $\vec{x}^*$ of LP~\eqref{lp:approxOPT}
% \STATE 2. Independently sample $k$ elements from $[n]$ { \it without replacement} using distribution $\vec{x}^*/k$.
% \end{algorithmic}
% \caption{Independent Sampling Without Replacement}
% \label{alg:indepSample}
% \end{algorithm}

\begin{theorem}\label{thm:uniformApprox}
	Assume each target leaks information with equal probability $p_i = \frac{1-p_0}{n}$. Let $\bar{c}_i \geq 0$ be the shifted cost and $U_{indepSample}$ be the defender utility achieved by \emph{ independent sampling without replacement}. Then we have:
	$$
	\begin{array}{c}
	U_{indepSample} \geq (\frac{k-2}{k-1} -\frac{1}{e}) \left[ Opt(LP~\ref{lp:naiveLP}) - (1-p_0)\frac{\sum_{i=1}^n \bar{c}_i}{n} \right].
	\end{array}
	$$
	where $(1-p_0)\frac{\sum_{i=1}^n \bar{c}_i}{n}$ is an additive loss to $Opt(LP~\ref{lp:naiveLP})$, which is usually small in security games.
\end{theorem}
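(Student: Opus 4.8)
The plan is to bound $U_{indepSample}$ term by term against a decomposition of $Opt(LP~\ref{lp:naiveLP})$. First shift all payoffs by $-\min_i c_i$, so that $0\le\bar c_i\le\bar r_i$ and the game is constant-sum with non-negative payoffs; write $\bar c:=\tfrac1n\sum_i\bar c_i$, $\rho:=\tfrac{k-2}{k-1}-\tfrac1e$, and $Opt'$ for the optimum of the shifted LP~\ref{lp:approxOPT} with marginal $\vec x^*$. Since the marginals of any mixed strategy are feasible for LP~\ref{lp:approxOPT}, the no-leakage utility $u^{opt}$ of the leakage-optimal strategy satisfies $u^{opt}\le Opt'$, hence $Opt(LP~\ref{lp:naiveLP})\le Opt'$, and $Opt(LP~\ref{lp:naiveLP})\ge 0$ after the shift. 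Let $\Pr$ denote the law of the protected set $s$ drawn by independent sampling without replacement from $\vec x^*/k$, and use $U_{indepSample}=p_0u^{indep}+\sum_i p_i\big(u_i^{indep}+v_i^{indep}\big)$ with $u^{indep},u_i^{indep},v_i^{indep}$ the corresponding quantities from the derivation of LP~\ref{lp:naiveLP}.

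For the no-leakage term I would prove the coverage bound $\Pr(t_j)\ge 1-(1-x_j^*/k)^k\ge 1-e^{-x_j^*}\ge(1-\tfrac1e)x_j^*$: the first inequality holds because sampling without replacement only increases the chance of hitting $j$ relative to sampling with replacement, and the last by concavity of $1-e^{-t}$ on $[0,1]$. Since $\bar r_j\ge\bar c_j\ge 0$, this gives $\bar c_j+(\bar r_j-\bar c_j)\Pr(t_j)\ge(1-\tfrac1e)\big[\bar c_j+(\bar r_j-\bar c_j)x_j^*\big]\ge(1-\tfrac1e)Opt'$, so $u^{indep}\ge(1-\tfrac1e)Opt'\ge\rho\cdot Opt(LP~\ref{lp:naiveLP})$.

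The leakage terms are the substance. The key structural claim is that conditioning on the leaked status of target $i$ again yields an instance of the same sampling process: given $\neg t_i$ it is independent sampling without replacement of $k$ elements from the other $n-1$ targets (with renormalized weights), and given $t_i$ it is $\{i\}$ together with independent sampling without replacement of $k-1$ elements from those $n-1$ targets; I would verify this via the exponential-clock representation of weighted sampling without replacement. The coverage bound inside these sub-instances yields, for $j\ne i$, $\Pr(t_j\mid\neg t_i)\ge(1-\tfrac1e)x_j^*$ and $\Pr(t_j\mid t_i)\ge 1-e^{-(k-1)x_j^*/(k-x_i^*)}\ge\tfrac{k-1}{k}(1-\tfrac1e)x_j^*$, the $\tfrac{k-1}{k}$ being the price of committing one resource to $i$. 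Combined with the facts that (i) when $i$ leaks ``unprotected'' the attacker can hit $i$ for value exactly $\bar c_i$, so $v_i^{indep}\ge\Pr(\neg t_i)\min(\bar c_i,(1-\tfrac1e)Opt')$ while also $v_i^{opt}\le\bar c_i$ — giving $\sum_i v_i^{opt}\le n\bar c$ and hence $Opt(LP~\ref{lp:naiveLP})-(1-p_0)\bar c\le p_0u^{opt}+\tfrac{1-p_0}{n}\sum_i u_i^{opt}$ — and (ii) when $i$ leaks ``protected'' the attacker never hits $i$, the remaining work is to establish $\tfrac1n\sum_i u_i^{indep}\ge\rho\cdot\tfrac1n\sum_i u_i^{opt}$. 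Here I would argue in aggregate rather than per target, using $\sum_i\Pr(t_i)=k$ and the identity $\sum_i\Pr(t_i\mid t_j)=k$, so that a ``difficult'' target (small $x_i^*$ and small $\bar c_i$, which no strategy can defend once leaked) is only ever charged against the additive loss; adding the no-leakage bound then yields the claimed inequality.

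The main obstacle is exactly this leakage analysis: pinning down the conditional-process description and, crucially, carrying the approximation factor through it with the precise constant $\tfrac{k-2}{k-1}$ rather than a looser $\Theta(1-1/k)$. Conditioning on ``$i$ protected'' is a positively correlated event, so it can only depress the coverage of the other targets, and that depression must be controlled tightly. The most delicate step is the aggregate comparison of $\sum_i u_i^{indep}$ with $\sum_i u_i^{opt}$, which is where the equal-leakage-probability hypothesis is genuinely used and where the obstruction to a general-$\vec p$ bound originates, since the sampling distribution ignores $\vec p$ while the optimum adapts to it.
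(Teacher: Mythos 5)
Your outer architecture is the paper's: shift payoffs, upper-bound $Opt(LP~\ref{lp:naiveLP})$ by $\bigl(p_0+k\tfrac{1-p_0}{n}\bigr)U^* + \tfrac{1-p_0}{n}\sum_i\bar c_i$ using $u_i\le\Pr(t_i)U^*$, $v_i\le\Pr(\neg t_i)\bar c_i$ and $\sum_i\Pr(t_i)\le k$, then lower-bound $U_{indepSample}$ by showing every target is covered with probability at least $\rho x_j^*$ both unconditionally and conditionally on each leaked event. (Two small simplifications the paper makes that you over-engineer: on the lower-bound side $v_i$ is simply dropped since $v_i\ge0$ after shifting, and the ``aggregate comparison'' you worry about is just the per-target bound $u_i^{indep}\ge y_{ii}\rho U^*$ summed with $\sum_i y_{ii}=k$.)

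The genuine gap is the conditional coverage bound, which is the entire technical content of the theorem, and your proposed route to it fails. It is \emph{not} true that, for weighted sampling without replacement, conditioning on $t_i$ leaves $\{i\}$ plus an independent size-$(k-1)$ sample of the same type from the other $n-1$ targets (nor that conditioning on $\neg t_i$ leaves a size-$k$ sample from them). In the exponential-clock representation you invoke, the event ``$i$ is among the $k$ smallest clocks'' tilts the joint law of the \emph{other} clocks upward (they must be beaten by $i$'s), which is exactly the positive correlation you flag at the end but never control; already for $n=3$, $k=2$ one checks $\Pr(t_2\mid t_1)\neq x_2/(x_2+x_3)$. So your bounds $\Pr(t_j\mid t_i)\ge 1-e^{-(k-1)x_j^*/(k-x_i^*)}$ and $\Pr(t_j\mid\neg t_i)\ge(1-\tfrac1e)x_j^*$ are unsupported. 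The paper gets the constant $\tfrac{k-2}{k-1}-\tfrac1e$ by a different mechanism (Lemma~\ref{lem:probbound}): it writes the closed form of the \emph{with}-replacement joint probability $z_{ij}=1-(1-\tfrac{x_i}{k})^k-(1-\tfrac{x_j}{k})^k+(1-\tfrac{x_i}{k}-\tfrac{x_j}{k})^k$, bounds $z_{ij}/z_{ii}\ge\rho x_j$ via the factorization $a^k-b^k=(a-b)\sum_{t}a^tb^{k-1-t}$ (this is where $\tfrac{k-2}{k-1}$ comes from), and then transfers to without-replacement by a coupling showing $y_{ii}\ge z_{ii}$ and $y_{ij}/y_{ii}\ge z_{ij}/z_{ii}$. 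None of that appears in your proposal; you explicitly defer it as ``the main obstacle,'' so the proof is incomplete at its crux.
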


\subsection*{Proof of Theorem~\ref{thm:uniformApprox}}
Let $Y=Y(\vec{x}) \in \RR^{n\times n}$ be a function of any $\vec{x}\in \RR^{n}$, where $y_{ij}$ is the probability that target $i,j$ are both protected using independent sampling {\it without} replacement. We first prove Lemma~\ref{lem:probbound}, which provides a lower bound regarding how good the pair-wise marginals in $Y$ approximate the given marginals $\vec{x}$. The difficulty of proving Lemma~\ref{lem:probbound} lies at that  $Y$ does not have a close form in terms of $\vec{x}$ if we sample without replacement. Our proof  is based on a coupling argument by relating the algorithm to independent sampling {\it with} replacement. 
\begin{lemma} \label{lem:probbound}
	Given $\vec{x}$, $Y=Y(\vec{x})$ satisfies the following (in)equalities:
	\begin{eqnarray}
	&& \sum_{i\in[n]} y_{ii} = k; \label{eqn:aproxProbmargin} \\
	&& y_{ii} \geq (1-\frac{1}{e}) x_i, \, \forall i \in [n]; \label{eqn:aproxProbentry}\\
	&& \frac{y_{ij}}{y_{ii}}  \geq (\frac{k-2}{k-1}-\frac{1}{e})x_j, \, \forall i, \not = j. \label{eqn:aproxProbbinary}
	\end{eqnarray}
\end{lemma}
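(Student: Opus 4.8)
The plan is to analyze independent sampling without replacement by coupling it with the more tractable independent sampling *with* replacement. Let me set up the with-replacement process: draw $k$ i.i.d.\ samples $I_1,\dots,I_k$ from $[n]$, each with distribution $\vec{x}/k$. Because $\sum_i x_i = k$ and $0 \le x_i \le 1$, this is a valid distribution. For a single draw, $\Pr(I_\ell = i) = x_i/k$, so the probability that target $i$ is hit by at least one of the $k$ draws is $1 - (1 - x_i/k)^k \ge (1 - 1/e) x_i$, where the last inequality uses the standard fact that $1 - (1-z/k)^k \ge (1-1/e) z$ for $z \in [0,1]$ (concavity of $1-(1-z/k)^k$ on $[0,1]$ and its value at the endpoints $z=0,1$). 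The without-replacement process can be realized from the with-replacement draws by resampling collisions; the key monotonicity observation is that the set of targets covered without replacement *contains* the set covered with replacement in a natural coupling — actually more carefully, the without-replacement process covers exactly $k$ distinct targets whereas with-replacement covers at most $k$, and one can couple so that the former set contains the latter. This immediately gives \eqref{eqn:aproxProbentry} since $y_{ii} \ge \Pr(i \text{ covered with replacement}) \ge (1-1/e)x_i$, and \eqref{eqn:aproxProbmargin} is immediate because without replacement always picks exactly $k$ distinct targets, so $\sum_i y_{ii} = \ex[\,|s|\,] = k$.

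The remaining and harder inequality is \eqref{eqn:aproxProbbinary}, the bound on $y_{ij}/y_{ii} = \Pr(j \in s \mid i \in s)$. Here I would condition on the event that $i$ is covered and analyze the conditional distribution of the remaining picks. The cleanest route: condition on target $i$ being hit for the *first time* at draw $\ell$; given this, the draws after $\ell$ (in the with-replacement view, before collision-resampling) are still i.i.d.\ from $\vec{x}/k$, and there are $k - \ell$ of them in the worst case — but we must be careful because without replacement we get to re-draw the collisions, which only helps. The worst case for the conditional probability of covering $j$ is when $i$ is picked as late as possible, i.e., effectively leaving only $k-1$ fresh-ish draws but with a slightly distorted distribution on $[n]\setminus\{i\}$. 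I expect the argument to show $\Pr(j \in s \mid i \in s) \ge 1 - (1 - x_j/(k-1))^{k-1}\cdot(\text{correction})$ and then to extract the factor $\frac{k-2}{k-1} - \frac{1}{e}$ by the same concavity trick applied with $k-1$ draws, losing one more factor of $\frac{k-2}{k-1}$ from the conditioning distortion (the distribution on the other targets is $\vec{x}/k$ renormalized, which is within a $\frac{k-1}{k}$-type factor of $\vec{x}/(k-1)$, but going the safe direction costs $\frac{k-2}{k-1}$).

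The main obstacle I anticipate is making the coupling between with- and without-replacement sampling fully rigorous for the *conditional* statement \eqref{eqn:aproxProbbinary}: unconditionally, monotone coupling of covered sets is clean, but once we condition on $i \in s$ the two processes' conditional laws differ, and I need to argue that the without-replacement conditional law stochastically dominates a with-replacement-style process with $k-1$ (or $k-2$) effective draws. I would handle this by exposing the draws sequentially, letting $\ell$ be the index of the first draw landing in $\{i\} \cup (\text{already-chosen})$ that ``uses up'' the $i$-slot, and showing that regardless of $\ell \le k$, at least $k - \ell \ge 0$ genuinely fresh draws remain, and then taking the worst case $\ell$; a slightly slicker alternative is to argue directly on the without-replacement process by a symmetry/exchangeability argument on the order in which the $k$ distinct elements are selected. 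Either way, once the conditional process is reduced to ``at least $k-2$ i.i.d.\ draws from a distribution $\ge \vec{x}/k$ pointwise on $[n]\setminus\{i\}$,'' the concavity inequality $1 - (1-z/m)^m \ge (1 - 1/e)z$ closes it, and collecting the $\frac{k-2}{k-1}$ factor from comparing $\frac{1}{k}$ to $\frac{1}{k-1}$ (or from $m = k-2$ vs.\ the target rate) yields the stated constant.
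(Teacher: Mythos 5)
Your setup---couple sampling without replacement to $k$ i.i.d.\ draws from $\vec{x}/k$, use the closed form $1-(1-x_i/k)^k$, and extract the constant by concavity---is exactly the strategy of the paper's proof, and your treatment of \eqref{eqn:aproxProbmargin} and \eqref{eqn:aproxProbentry} is sound: the monotone coupling (run a single i.i.d.\ stream; the distinct values among the first $k$ draws are contained in the first $k$ distinct values overall) gives $y_{ii}\ge z_{ii}=1-(1-x_i/k)^k\ge(1-\frac1e)x_i$.

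The gap is in \eqref{eqn:aproxProbbinary}, which you do not actually prove. Your first route---condition on the first occurrence of $i$ at draw $\ell$, keep the $k-\ell$ subsequent fresh draws, and ``take the worst case $\ell$''---collapses: at $\ell=k$ no draws remain and the bound is vacuous (the draws \emph{before} $\ell$ also help cover $j$, but they are conditioned to avoid $i$ and you have not analyzed them). Your second route reduces the conditional law of $s\setminus\{i\}$ given $i\in s$ to ``at least $k-2$ i.i.d.\ draws from a distribution pointwise $\ge\vec{x}/k$''; this is asserted, not established, and it sits exactly at the point where the conditional laws of the two processes genuinely differ---the paper itself needs a separate coupling step (move the first occurrence of $i$ to the front by exchangeability and compare the two stopped processes) just to show $y_{ij}/y_{ii}\ge z_{ij}/z_{ii}$. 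The paper then sidesteps your conditional-process analysis entirely: it computes the with-replacement quantity in closed form by inclusion--exclusion, $z_{ij}=1-(1-\frac{x_i}{k})^k-(1-\frac{x_j}{k})^k+(1-\frac{x_i}{k}-\frac{x_j}{k})^k$, and derives $z_{ij}/z_{ii}\ge(\frac{k-2}{k-1}-\frac1e)x_j$ by elementary algebra using $a^k-b^k=(a-b)\sum_{t=0}^{k-1}a^tb^{k-1-t}$; the factor $\frac{k-2}{k-1}$ falls out of that computation, not out of a ``lose one draw'' heuristic. To complete your argument you would need to (i) prove the conditional stochastic domination between the two sampling schemes and (ii) carry out the conditional with-replacement computation explicitly; neither step is routine enough to leave as a remark.
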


\begin{proof}
	The first equation is easy to see, since each sampled pure strategy has $k$ different targets due to sampling without replacement. To prove the other two inequalities, we instead consider independent sampling {\it with} replacement. Similarly, define function $Z=Z(\vec{x})\in R^{n\times n}$ to a function of $\vec{x}$, where $z_{ij}$ is the probability that target $i,j$ are protected together when sampling with replacement. Contrary to $Y$, $Z$ has succinct close forms, therefore we can lower bound entries in $Z$. We first consider $z_{ii}$.
	\begin{eqnarray*}
	z_{ii} &=& 1-(1-x_{i}/k)^{k} \\
	&\geq & 1-e^{-x_i} \\ 
	& \geq & (1-\frac{1}{e})x_{i}.
	\end{eqnarray*}
	where we used the fact $(1-\epsilon)^{\frac{1}{\epsilon}} \leq e^{-1}$ for any $\epsilon \in (0,1)$. Now we lower bound $z_{ij}/z_{ii}$ as follows. 
	\begin{eqnarray} \label{eqn:cor_eqn}
	\frac{z_{ij}}{z_{ii}} &=& \frac{1-(1-\frac{x_{i}}{k})^{k}-(1-\frac{x_{j}}{k})^{k}+(1-\frac{x_{i}}{k}-\frac{x_{j}}{k})^{k}}{1-(1-x_{i}/k)^{k}} \\ \nonumber
	& = & 1-(1-\frac{x_{j}}{k})^{k} - \frac{(1-\frac{x_i}{k})^k(1-\frac{x_j}{k})^k-(1-\frac{x_{i}}{k}-\frac{x_{j}}{k})^{k}}{1-(1-x_{i}/k)^{k}}\\ \nonumber
	& = & 1-(1-\frac{x_{j}}{k})^{k} - \frac{(1-\frac{x_i}{k})^k}{1-(1-x_{i}/k)^{k}}[(1-\frac{x_j}{k})^k - (1-\frac{x_j}{k-x_i})^k]\\ \nonumber
	& \geq & (1-\frac{1}{e})x_j - \frac{e^{-x_i}}{1-e^{-x_i}}[(1-\frac{x_j}{k})^k - (1-\frac{x_j}{k-x_i})^k] 
	\end{eqnarray}
	where all the equations just follow the arithmetic, while the inequality uses the fact that $(1-\frac{x_j}{k})^k \leq e^{-x_j}$ and $-\frac{x}{1-x}$ is a decreasing function of $x \in (0,1)$.  
	We now upper-bound the term $(1-\frac{x_j}{k})^k - (1-\frac{x_j}{k-1})^k$ using the formula $a^k - b^k = (a-b) \sum_{i=0}^{k-1}a^i b^{k-1-i}$, as follows
	
\begin{eqnarray*}
	& & (1-\frac{x_j}{k})^k - (1-\frac{x_j}{k-x_i})^k \\
	&=& (1-\frac{x_j}{k} -1+\frac{x_j}{k-x_i})\sum_{t=0}^{k-1} (1-\frac{x_j}{k})^t(1-\frac{x_j}{k-x_i})^{k-1-t} \\
	& \leq & \frac{x_j x_i}{k(k-x_i)} \times k  \\
	& \leq & \frac{x_i x_j}{k-1}
\end{eqnarray*}
	Plugging in the above upper bound back to Inequality~\ref{eqn:cor_eqn}, we thus have
	\begin{eqnarray*}
		\frac{z_{ij}}{z_{ii}} & \geq & (1-\frac{1}{e})x_j - \frac{e^{-x_i}}{1-e^{-x_i}} \frac{x_i x_j}{k-1} \\
		& \geq & (1-\frac{1}{e})x_j - \frac{x_i}{e^{x_i}-1} \frac{x_j}{k-1} \\
		& \geq & (1-\frac{1}{e})x_j - \frac{x_j}{k-1} \\
		& = & (\frac{k-2}{k-1}-\frac{1}{e})x_j
	\end{eqnarray*}
	where the last inequality is due to the fact that $f(x)=\frac{x}{e^x -1}$ is a decreasing function for $x\in(0,1)$ and is upper bounded by $\lim_{x\to 0} \frac{x}{e^x-1}=1$.  
	
	Therefore, we have $\frac{z_{ij}}{z_{ii}}  \geq (\frac{k-2}{k-1}-\frac{1}{e})x_j$. We then conclude our proofs by claiming that $y_{ii} \geq z_{ii}$ and $y_{ij}/y_{ii} \geq z_{ij}/z_{ii}$.
	
	To prove our claim, we use a coupling argument. Consider the following two stochastic process (StoP):
	\begin{enumerate}
		\item $StoP^1$: at time $t$ independently sample a random value $i_t$ ($\in [n]$) with probability $x_{i_t}/k$ for any $t=1,2,...$ until precisely $k$ {\it different} elements from $[n]$ show up.
		
		\item  $StoP^2$: at time $t$ independently sample a random value $i_t$ ($\in [n]$) with probability $x_{i_t}/k$ for $t=1,2,...k$. 
	\end{enumerate}
	Let $C^1$ [$C^2$] denote all the possible random sequences generated by $StoP^1$ [$StoP^2$], and $C_{i}^1$ [$C_{i}^2$] denote the subset of $C^1$ [$C^2$], which consists of all the sequences including at least one $i$. For any $e\in C_{i}^2$, let $C_{e}$ be the subset of sequences in $C^1$, whose first $k$ element is precisely $e$. Notice that any sequence in $C^1$ has at least length of $k$ while any sequence in $C^2$ has precisely $k$ elements. Furthermore, $C_e \subseteq C_{i}^1$ and $C_e \cap C_{e'} = \emptyset$ for any $e,e' \in C_{i}^2$ and $e \not = e'$. 
	
	Now, think of each sequence as a probabilistic event generated by the stochastic process. Notice that $P(e;StoP^2)=P(C_e;StoP^1)$ due to the independence of the sampling procedure, therefore, we have
	\begin{eqnarray*}
		P(C_{i}^2 ; StoP^2) &=& \sum_{e\in C_{i}^2} P(e;StoP^2) \\
		&=& \sum_{e\in C_{i}^2} P(C_e;StoP^1) \\
		&\leq & P(C_{i}^1 ; StoP^1)
	\end{eqnarray*}
	However, $P(C_{i}^1|StoP^1)=y_{ii}$ and $P(C_{i}^2 | StoP^2) = z_{ii}$. This proves $y_{ii} \geq z_{ii}$.
	
	Notice that  $y_{ij}/y_{ii} \geq z_{ij}/z_{ii}$ is equivalent to $P(e \in C_j^2 | e \in C_i^2; StoP^2) \geq P(e \in C_j^1 | e \in C_i^1; StoP^1)$. To prove this inequality, we claim that it is without loss of generality to assume the first sample is $i$ in both processes. This is because, if the first $i$ shows up at the $t$'th sample, moving $i$ to the first position would not change the probability of the sequence due to independence between each sampling step. Conditioned on $i$ is sampled first, a similar argument as above shows that the probability of Stochastic process $StoP^1$ generating $j$ is at least the probability of stochastic process $StoP^2$ generating $j$. 
\end{proof}

Let $\vec{x}^*$ be the optimal solution to LP~\eqref{lp:approxOPT} and $U^*$ be the corresponding objective value -- the defender optimal utility with no leakage. To prove Theorem 2, we start from comparing $OPT(LP~\ref{lp:naiveLP})$ with $U^*$. From the objective of LP \eqref{lp:naiveLP}, we know that $u \leq U^*$, $u_i \leq U^*$ since $U^*$ is the best possibly utility using $k$ resources, and $v_i \leq \bar{c}_i$ since if target $i$ is uncovered, the defender gets utility at most $\bar{c}_i$. Therefore, we have
\begin{eqnarray} \label{eqn:ind_upper} \nonumber
	OPT(LP~\ref{lp:naiveLP})&\leq& p_0 U^* + \frac{1-p_0}{n}\sum_{i=1}^n (x_{ii}^* U^* + (1-x_{ii}^*)\bar{c}_i) \\ 
	&\leq&  (p_0  + k\frac{1-p_0}{n}) U^* + \frac{1-p_0}{n} \sum_{i=1}^n \bar{c}_i
\end{eqnarray}
where we used the equation $\sum_{i \in [n]} x_{ii}^* = k$. We now examine $U_{indepSample}$. A simple argument yields that, if $\bar{c}_i \geq 0$ for all $i$ and each target $i$ is covered by probability at least $\alpha x_i^*$ for any $i$, then the defender utility is at least $\alpha U^*$. Therefore, by Lemma~\ref{lem:probbound} we have
\begin{eqnarray} \label{eqn:ind_lower} \nonumber
	 U_{indepSample} &\geq& p_0 (1-\frac{1}{e}) U^* + \frac{1-p_0}{n}\sum_{i=1}^n y_{ii} (\frac{k-2}{k-1}-\frac{1}{e})U^* \\ 
	&\geq&  (\frac{k-2}{k-1}-\frac{1}{e})(p_0  + k\frac{1-p_0}{n}) U^*
\end{eqnarray}
where we used the fact that $\sum_{i \in [n]} y_{ii}^* = k$. 
Comparing Inequalities \eqref{eqn:ind_upper} and \eqref{eqn:ind_lower}, we have $U_{indepSample} \geq (\frac{k-2}{k-1} -\frac{1}{e})[OPT(LP2) - \frac{1-p_0}{n} \sum_{i=1}^n \bar{c}_i]$. This concludes our proof of Theorem~\ref{thm:uniformApprox}. 

% Let $Y=Y(\vec{x}) \in \mathbb{R}^{n\times n}$ be a function of any $\vec{x}\in \mathbb{R}^{n}$, where $y_{ij}$ is the probability that target $i,j$ are both protected using independent sampling without replacement (Algorithm~\ref{alg:indepSample}). The following lemma provides a lower bound about how good the marginals or conditional marginals approximate the best marginal $\vec{x}^{*}$. We note that the direct computation of $Y$ involves quite amount of arithmetic calculations. Instead, our proof couples Algorithm~\ref{alg:indepSample} with independent sampling {\it with} replacement and is based on a coupling argument. 
% \begin{lemma}
% Given $\vec{x}^*$, $Y=Y(\vec{x}^*)$ satisfies the following (in)equalities:
% \begin{eqnarray}
% && \sum_{i\in[n]} y_{ii} = k; \label{eqn:aproxProbmargin} \\
% && y_{ii} \geq (1-\frac{1}{e}) x_i^*, \, \forall i \ in [n]; \label{eqn:aproxProbentry}\\
% & & \frac{y_{ij}}{y_{ii}}  \geq (1-\frac{1}{e}) x_j^*, \, \forall i,j. \label{eqn:aproxProbbinary}
% \end{eqnarray}
% \end{lemma}

\section{Sampling Algorithms}

From Carath\'{e}odory's theorem we know that, given any marginal coverage $\vec{x}$, there are many different mixed strategies achieving the same marginal $\vec{x}$ (e.g., see examples in Section~\ref{sec:model}). Another way to handle information leakage is to generate the optimal marginal coverage $\vec{x}^*$, computed by LP~\eqref{lp:approxOPT}, with low correlation between targets. Such a ``good" mixed strategy, e.g., the mixed strategy with maximum entropy, is usually supported on a pure strategy set of exponential size. 
%This raises computational challenges: how to generate a mixed strategy by sampling from an exponentially large support in polynomial time? 
In this section, we propose two sampling algorithms, which efficiently generate a mixed strategy with exponentially large support and are guaranteed to achieve any given marginal $\vec{x}$.

%One way to improve the unpredictability of security deployments is to generate a mixed strategy with large support. There are two advantages: 1. a large support is hard  to be fully explored by the attacker, though he can still observe marginals or binary marginals; 2. properly computed large-support mixed strategies, as we do in this section, decrease the correlation between targets. However, generating mixed strategies with (exponentially) large support naturally raises computational questions: how to sample a pure strategy from an exponentially large support in polynomial time? In this section, we propose two sampling algorithms, which efficiently generate a mixed strategy with exponentially large support and are guaranteed to achieve any given marginal $\vec{x}$.

\subsection{Max-Entropy Sampling}
Perhaps the most natural choice to achieve low correlation is the distribution with maximum entropy restricted to achieving the marginal $\vec{x}$, %the probabilities for each pure strategy $s$ in
which can be formulated as the solution of Convex Program (CP)~\eqref{lp:maxentropy}. However, naive approaches for CP~\eqref{lp:maxentropy} require exponential running time since there are  $O(2^n)$ variables.
% Moreover, even given the optimal solution, it is not clear how to efficiently sample a pure strategy from an exponentially large support. 
Interestingly, it turns out that this be resolved.
\begin{lp} \label{lp:maxentropy}
	\maxi{\sum_{s \in S}-\theta_s \ln(\theta_s)} 
	\st
	\qcon{ \sum_{s:\, i\in s} \theta_s = x_i}{i \in [n]}
	\con{\sum_{s \in S} \theta_s =1}   
	\qcon{\theta_{s} \geq 0}{ s \in S}   
\end{lp}
where variable $\theta_s$ is the probability of taking pure strategy $s$.
\begin{theorem}\label{thm:maxentropy}
	There is an efficient algorithm which runs in $poly(n,k)$ time and outputs a pure strategy $s$ with probability $\theta_s^*$  for any pure strategy $s \in S$, where $\vec{\theta}^*$ is the optimal solution to Convex Program~\eqref{lp:maxentropy} (within machine precision\footnote{Computers cannot solve general convex programs exactly due to possible irrational solutions. Therefore, our algorithm is optimal within machine precision, and we simply call it "solved".}).
\end{theorem}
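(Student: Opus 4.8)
The plan is to combine the classical exponential-family characterization of maximum-entropy distributions with the observation that, for the uniform-matroid support $S$, the relevant partition function is an elementary symmetric polynomial and hence efficiently computable. I will treat $S$ as the family of subsets of $[n]$ of size exactly $k$ (the ``size at most $k$'' case is identical, with $e_k$ below replaced by $\sum_{j=0}^{k} e_j$), and I assume $\vec{x}$ is an achievable marginal, so that CP~\eqref{lp:maxentropy} is feasible. First I would preprocess $\vec{x}$ so that $0 < x_i < 1$ for all $i$: a target with $x_i = 0$ is deleted, and a target with $x_i = 1$ is placed in every pure strategy with $k$ decremented; neither operation changes the optimal distribution on the remaining coordinates. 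Then $\vec{x}$ lies in the relative interior of the feasible set, Slater's condition holds, and since the objective of CP~\eqref{lp:maxentropy} is strictly concave over a nonempty bounded polytope, the unique optimum has the product (Gibbs) form $\theta_s^* = \frac{1}{Z}\prod_{i\in s}\lambda_i$ with $\lambda_i = e^{y_i^*} > 0$, where $\vec{y}^*$ minimizes the convex dual
\[
g(\vec{y}) \;=\; \ln Z(\vec{y}) - \langle \vec{x}, \vec{y}\rangle, \qquad Z(\vec{y}) \;=\; \sum_{s\in S}\,\prod_{i\in s} e^{y_i} \;=\; e_k\big(e^{y_1},\dots,e^{y_n}\big),
\]
and $Z = Z(\vec{y}^*)$; here $\partial g/\partial y_i = \mu_i(\vec{y}) - x_i$, where $\mu_i(\vec{y})$ is the probability that $i \in s$ under $\theta_s(\vec{y}) \propto \prod_{i\in s} e^{y_i}$, so stationarity of $g$ is precisely the marginal constraint.

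Next I would argue that the dual can be solved in polynomial time. The function $g$ is convex and, after fixing the one-dimensional gauge freedom $\vec{y} \mapsto \vec{y} + c\vec{1}$ (say, $y_n = 0$), coercive, with a minimizer lying in a ball of radius $\mathrm{poly}(n, k, 1/\delta)$, where $\delta = \min_i \min(x_i, 1 - x_i)$. The crucial computational fact is that $g$ and $\nabla g$ admit $\mathrm{poly}(n,k)$-time evaluation oracles: writing $z_i = e^{y_i}$, the value $Z(\vec{y}) = e_k(z_1,\dots,z_n)$ is computed in $O(nk)$ time by the recurrence $e_j(z_1,\dots,z_m) = e_j(z_1,\dots,z_{m-1}) + z_m\, e_{j-1}(z_1,\dots,z_{m-1})$, and $\mu_i(\vec{y}) = z_i\, e_{k-1}(z_{-i})/e_k(z)$, again a ratio of elementary symmetric polynomials, is computed in polynomial time. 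Feeding these oracles and the radius bound into the ellipsoid method (or any first-order or interior-point solver) minimizes $g$ to within machine precision in $\mathrm{poly}(n,k)$ time, yielding $\vec{y}^*$ and hence the weights $\vec{\lambda}^*$.

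Finally I would sample from the product distribution $\theta_s^* \propto \prod_{i\in s}\lambda_i^*$ sequentially over $i = 1,\dots,n$: precompute the suffix elementary symmetric polynomials $e_j(\lambda_i^*,\dots,\lambda_n^*)$ for all $i, j$ (an $O(nk)$ dynamic program); then, having chosen $T \subseteq \{1,\dots,i-1\}$ with $|T| = t < k$, include $i$ with probability $\lambda_i^*\, e_{k-t-1}(\lambda_{i+1}^*,\dots,\lambda_n^*)\,/\,e_{k-t}(\lambda_i^*,\dots,\lambda_n^*)$, and stop including new targets once $t = k$. A telescoping product shows the returned $s$ has probability exactly $\theta_s^*$, and each sample costs $O(nk)$ after the precomputation.

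I expect the main obstacle to be twofold. The elementary-symmetric-polynomial dynamic programs and the telescoping identity in the last step are routine; the delicate points are (i) rigorously establishing the Gibbs form of the optimum --- strong duality together with existence of a \emph{finite} dual minimizer (equivalently, that the preprocessed $\vec{x}$ lies in the relative interior of the marginal polytope), so that genuine weights $\lambda_i^* \in (0,\infty)$ exist --- and (ii) the approximation bookkeeping behind the ``within machine precision'' caveat: because the dual is solved only approximately, one must show the computed $\vec{\lambda}$ is close enough to $\vec{\lambda}^*$ that the sampled distribution is within machine precision of $\vec{\theta}^*$, which needs the a priori radius bound on $\vec{y}^*$ together with a local Lipschitz/conditioning estimate for $g$ near its minimizer.
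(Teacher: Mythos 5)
Your proposal follows essentially the same route as the paper: the Gibbs/product form of the max-entropy optimum via the dual convex program in $n$ variables, evaluation of the partition function $\sum_{s\in S}\prod_{i\in s}\alpha_i$ by the elementary-symmetric-polynomial recurrence $T(i,j)=T(i,j-1)+\alpha_j T(i-1,j-1)$, and sequential conditional sampling whose telescoping product recovers $\theta_s^*$ exactly. Your treatment is in fact somewhat more careful than the paper's on the points it defers to a citation or a footnote (boundary marginals, existence of a finite dual minimizer, and the precision bookkeeping), but these are refinements of the same argument rather than a different approach.
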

The proof of Theorem~\ref{thm:maxentropy} relies on Lemmas \ref{lem:entropycompute} and \ref{lem:entropysample}. Lemma~\ref{lem:entropycompute} presents a compact representation of $\vec{\theta}^{*}$ based on the KKT conditions of CP~\eqref{lp:maxentropy} and its dual -- the {\it unconstrained} Convex Program~\eqref{lp:maxentropydual}:  
\begin{lp}\label{lp:maxentropydual}
	\mini{f(\vec{\beta})=\sum_{i=1}^{n} \beta_i x_i + \ln (\sum_{s \in S} e^{-\beta_s}),}
\end{lp}
where variables $\vec{\beta} \in \mathbb{R}^n$ and $e^{-\beta_s}=\Pi_{i\in s} e^{-\beta_i}$. We notice that the dual program ~\eqref{lp:maxentropydual} as well as the characterization of $\theta_s^*$ in Lemma~\ref{lem:entropycompute} are not new (e.g., see \cite{EntropyMohit}), and we state it for completeness. Our contribution lies at proving that CP~\eqref{lp:maxentropydual} can be computed efficiently in $poly(n,k)$ time in our security game setting despite the summation $\sum_{s \in S} e^{-\beta_s}$ of $O(2^k)$ terms. 

\begin{lemma}\label{lem:entropycompute}
	Let $\vec{\beta}^* \in \mathbb{R}^n$ be the optimal solution to CP~\eqref{lp:maxentropydual} and set $\alpha_i =e^{-\beta_i^*}$ for any $i\in[n]$, then the optimal solution of CP~\eqref{lp:maxentropy} satisfies
	\begin{equation}
	\label{eq:entropyterm}
	\theta_s^{*}= \frac{\alpha_s}{\sum_{s\in S} \alpha_s},
	\end{equation}
	where $\alpha_s = \Pi_{i\in s} \alpha_i$ for any pure strategy $s \in S$. 
	
	Furthermore, $\vec{\beta}^*$ can be computed in $poly(n,k)$ time.
\end{lemma}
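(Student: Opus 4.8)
The plan is to prove the two assertions separately: the closed form \eqref{eq:entropyterm} is the textbook exponential-family solution of a maximum-entropy program, so I would only sketch it, while the $\poly(n,k)$ bound for computing $\vec\beta^*$ is the real content and hinges on a dynamic program for the partition function that exploits the ``subsets of size $\le k$'' structure of $S$. For the closed form, note first that CP~\eqref{lp:maxentropy} is a concave maximization with affine constraints and a nonempty, compact feasible set (the optimizer $\vec x$ of LP~\eqref{lp:approxOPT} satisfies $0\le x_i\le 1$ and $\sum_i x_i\le k$, hence is a convex combination of pure-strategy indicators), so an optimum exists; introducing multipliers $\beta_i$ for $\sum_{s\ni i}\theta_s=x_i$ and $\gamma$ for $\sum_s\theta_s=1$, Lagrangian stationarity in each $\theta_s>0$ gives $\ln\theta_s=-1-\gamma-\sum_{i\in s}\beta_i$, i.e.\ $\theta_s\propto\prod_{i\in s}e^{-\beta_i}=\alpha_s$, and eliminating $\gamma$ turns the dual into exactly the unconstrained program CP~\eqref{lp:maxentropydual}. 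Rather than chase constraint qualifications, I would then \emph{verify} optimality directly: for the minimizer $\vec\beta^*$ of $f$, the identity $\nabla f(\vec\beta^*)=0$ reads $\big(\sum_{s\ni i}\alpha_s\big)\big/\big(\sum_s\alpha_s\big)=x_i$ for all $i$, so the $\vec\theta^*$ of \eqref{eq:entropyterm} is a distribution meeting the marginal constraints; and since $\ln\theta_s^*=-\sum_{i\in s}\beta_i^*-\ln(\sum_s\alpha_s)$ is affine in the coverage indicators of $s$, the quantity $\sum_s\theta_s\ln\theta_s^*$ is the same for every feasible $\vec\theta$, equal to $\sum_s\theta_s^*\ln\theta_s^*=-H(\vec\theta^*)$ (with $H$ the Shannon entropy), whence $0\le\mathrm{KL}(\vec\theta\,\|\,\vec\theta^*)=H(\vec\theta^*)-H(\vec\theta)$, so $\vec\theta^*$ uniquely maximizes the entropy.

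For the algorithmic claim, $f$ is convex and smooth, so any standard first-order or ellipsoid method computes $\vec\beta^*$ to within machine precision given a $\poly(n,k)$-time value-and-gradient oracle for $f$; the task is to build that oracle. Writing $\alpha_i=e^{-\beta_i}$, the only nontrivial term in $f$ is the partition function $Z(\vec\beta)=\sum_{s\in S}e^{-\beta_s}=\sum_{s\subseteq[n],\,|s|\le k}\prod_{i\in s}\alpha_i=\sum_{j=0}^{k}e_j(\alpha_1,\dots,\alpha_n)$, a \emph{truncated} sum of elementary symmetric polynomials; all of $e_0,\dots,e_k$ are produced together in $O(nk)$ arithmetic operations by the recurrence $e_j^{(t)}=e_j^{(t-1)}+\alpha_t\,e_{j-1}^{(t-1)}$ that adjoins the variables one at a time, equivalently by reading off the coefficients of $\prod_t(1+\alpha_t z)\bmod z^{k+1}$. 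For the gradient, $\partial f/\partial\beta_i=x_i-Z(\vec\beta)^{-1}\sum_{s\ni i}e^{-\beta_s}$ with $\sum_{s\ni i}e^{-\beta_s}=\alpha_i\sum_{j=0}^{k-1}e_j(\{\alpha_t\}_{t\ne i})$, and the leave-one-out polynomials $e_j(\{\alpha_t\}_{t\ne i})$ over all $i$ are obtained in $\poly(n,k)$ total time, e.g.\ by dividing $\prod_t(1+\alpha_t z)$ by $(1+\alpha_i z)$ modulo $z^{k+1}$ for each $i$. Hence both $f$ and $\nabla f$ are $\poly(n,k)$-time computable and $\vec\beta^*$ follows.

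The step I expect to be the main obstacle is turning ``minimize the convex function $f$'' into a clean $\poly(n,k)$ guarantee: this needs an a priori bounded region known to contain $\vec\beta^*$, together with bounds on the gradient and curvature of $f$ there. This is straightforward when each $x_i\in(0,1)$, since then $\partial f/\partial\beta_i$ has sign $+$ as $\beta_i\to-\infty$ and sign $-$ as $\beta_i\to+\infty$, confining $\vec\beta^*$ to a box whose size is controlled by $n$, $k$ and $\min_i\min(x_i,1-x_i)$; the boundary cases $x_i\in\{0,1\}$ are eliminated by trivial preprocessing (a target covered with probability $0$ is deleted, and one covered with probability $1$ is deleted after decrementing $k$). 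Given this, the dynamic program above and the duality in the first paragraph are the substance, and everything else is routine convex-optimization bookkeeping.
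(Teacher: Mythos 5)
Your proposal is correct and follows essentially the same route as the paper: the heart of both arguments is the identical dynamic program for the truncated elementary symmetric polynomials (your recurrence $e_j^{(t)}=e_j^{(t-1)}+\alpha_t e_{j-1}^{(t-1)}$ is exactly the paper's table update $T(i,j)=T(i,j-1)+\alpha_j T(i-1,j-1)$), combined with standard convex optimization of the dual. The only difference is that you supply in full the pieces the paper outsources -- the exponential-family form of $\theta_s^*$ (which the paper cites from prior work, and which you verify directly via the KL-divergence argument), the leave-one-out gradient computation (which the paper leaves to the reader), and the localization of $\vec{\beta}^*$ needed to make the ``solve the convex program'' step rigorous -- all of which is sound.
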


\begin{proof}
	As proved in \cite{EntropyMohit}, the $\alpha^i$ above is precisely $e^{-\beta_i^{*}}$ where $\vec{\beta}^*$ is the optimal solution to CP~\eqref{lp:maxentropy}. We show that $\vec{\beta}^*$ can be computed in $poly(n,k)$ time. Notice that CP~\eqref{lp:maxentropydual} has $n$ variables but an expression of exponentially many terms, specifically, $\sum_{s\in S} e^{-\beta_s}$. The essential difficulty of computing $f(\vec{\beta})$ lies at computing the sum $\sum_{s\in S} e^{-\beta_s}$, since the other parts can be explicitly calculated in polynomial time.  Fortunately the sum $\sum_{s\in S} e^{-\beta_s}$ exhibits some combinatorial structure,and combinatorial algorithms could be employed for computation. In particular, we show that a dynamic program computes the sum $\sum_{s\in S} e^{-\beta_s}$ in $poly(n,k)$ time. The algorithm for  computing $\nabla f(\vec{\beta})$ can be designed in a similar fashion, and hence left to the reader. Since a convex program can be solved efficiently in machine precision given the access to its function value and derivatives, we then conclude our proof by describing the following dynamic program to compute $\sum_{s\in S} e^{-\beta_s}$, given any $\vec{\beta}$. 
	
	Notice that the set of all pure strategies consists of all the subsets of $[n]$ of cardinality $k$. Let $\alpha_i = e^{-\beta_i}$ and  $\alpha_s = \Pi_{i\in s} \alpha_i$. We then build the following DP table $T(i,j)=\sum_{s:s\subseteq [j],|s|=i} \alpha_s$, which sums over all the subsets of $[j]$ of cardinality $i$. Our goal is to compute $T(k,n)=\sum_{s\in S} e^{-\beta_s}$. We first initialize $T(0,j)=1$ and $T(j,j)=\Pi_{i=1}^j \alpha_i$  for any $j$. Then using the following update rule, we can build the DP table and compute $T(k,n)$ in $poly(k,n)$ time.
	\begin{eqnarray*}
		T(i,j) &=& T(i,j-1)+\alpha_j T(i-1,j-1).
	\end{eqnarray*}
\end{proof}

Our next lemma considers how to efficiently sample a pure strategy $s$ from an exponentially large support with probability $\theta_s^*$ represented by Equation~\eqref{eq:entropyterm}. The algorithm, as detailed in Algorithm~\ref{alg:EntropSampling}, simply goes through each target and adds it to the pure strategy with a specifically designed probability until exactly $k$ targets are added.

\begin{algorithm}
	\begin{algorithmic}[1]
		\REQUIRE :  $\vec{\alpha}\in[0,\infty)^n$, $k$.
		
		\ENSURE : a pure strategy $s$ with $|s|=k$.
		\vspace{1mm}
		
		\STATE Initialize: $s = \emptyset$; the DP table $T(0,j)=1$ and $T(j,j)=\Pi_{i=1}^j \alpha_i$  for any $j\in [n]$.
		
		\STATE Compute $T(i,j)=\sum_{s:s\subseteq [j],|s|=i} \alpha_s$ for any $i,j$ satisfying $i\leq k,j\leq n$ and $1 \leq i\leq j$, using the following update rule 
		\begin{eqnarray*}
			T(i,j) &=& T(i,j-1)+\alpha_j T(i-1,j-1).
		\end{eqnarray*}
		\STATE Set $i=k$, $j=n$;
		
		\WHILE{$i>0$}
		\STATE Sampling: independently add $j$ to $s$ with probability \begin{equation*}
		p=\frac{\alpha_j T(i-1,j-1)}{T(i,j)};
		\end{equation*}
		\IF{$j$ added to $s$} 
		\STATE $i = i - 1$;
		\ENDIF
		\STATE $j=j-1$;         
		\ENDWHILE
		\RETURN s.
	\end{algorithmic}
	\caption{Max-Entropy Sampling 	\label{alg:EntropSampling}}
\end{algorithm}

\begin{lemma}\label{lem:entropysample}
	Given any input $\vec{\alpha} \in [ 0,\infty)^n$, Algorithm~\ref{alg:EntropSampling} runs in $poly(k,n)$ time and correctly samples a pure strategy $s$ with probability $\theta_s = \frac{\alpha_s}{\sum_{s\in S} \alpha_s}$, where $\alpha_s = \Pi_{i\in s} \alpha_i$.
\end{lemma}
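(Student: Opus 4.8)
The plan is to verify the two claims of Lemma~\ref{lem:entropysample} in turn: that Algorithm~\ref{alg:EntropSampling} runs in $poly(k,n)$ time, and that it outputs each pure strategy $s$ with the correct probability $\theta_s = \alpha_s / \sum_{s' \in S} \alpha_{s'}$. The running time is immediate: the DP table $T(i,j)$ has $O(kn)$ entries, each computed in $O(1)$ time via the stated recurrence (which is correct because a size-$i$ subset of $[j]$ either omits $j$, contributing $T(i,j-1)$, or includes $j$, contributing $\alpha_j T(i-1,j-1)$), and the while-loop runs for at most $n$ iterations, each doing $O(1)$ work. So the bulk of the proof is the correctness of the sampling distribution.

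For correctness, I would track the state $(i,j)$ of the algorithm and argue by (reverse) induction on $j$ that, conditioned on reaching state $(i,j)$ with current partial strategy $s_0 \subseteq \{j+1,\dots,n\}$ of size $k-i$, the algorithm completes $s_0$ to a full size-$k$ strategy $s_0 \cup s'$, with $s' \subseteq [j]$ and $|s'|=i$, with probability exactly $\alpha_{s'} / T(i,j)$. The base case is $i=0$: the loop terminates, $s'=\emptyset$, and $\alpha_\emptyset / T(0,j) = 1/1 = 1$, matching. For the inductive step, from state $(i,j)$ the algorithm either adds $j$ to $s$ — with probability $p = \alpha_j T(i-1,j-1)/T(i,j)$, moving to state $(i-1,j-1)$ — or does not, with probability $1-p = T(i,j-1)/T(i,j)$ (using the recurrence), moving to state $(i,j-1)$. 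In the first case, by induction the remaining $i-1$ targets from $[j-1]$ forming some $s''$ are chosen with probability $\alpha_{s''}/T(i-1,j-1)$, so the total probability of obtaining $s' = \{j\} \cup s''$ is $p \cdot \alpha_{s''}/T(i-1,j-1) = \alpha_j \alpha_{s''} / T(i,j) = \alpha_{s'}/T(i,j)$; in the second case, $j \notin s'$ and $s' \subseteq [j-1]$, obtained with probability $(1-p)\cdot \alpha_{s'}/T(i,j-1) = \alpha_{s'}/T(i,j)$. Either way the claim holds for state $(i,j)$.

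Applying this with the initial state $(k,n)$ and $s_0 = \emptyset$ gives that each $s \in S$ is output with probability $\alpha_s / T(k,n) = \alpha_s / \sum_{s' \in S} \alpha_{s'}$, since $T(k,n) = \sum_{s': s' \subseteq [n], |s'|=k} \alpha_{s'} = \sum_{s' \in S} \alpha_{s'}$ by definition of $S$. One small point to address is well-definedness: the sampling probabilities $p$ lie in $[0,1]$ because $0 \le \alpha_j T(i-1,j-1) \le T(i,j)$ (all $\alpha_i \ge 0$, and the recurrence makes $T(i,j)$ the sum of the two nonnegative terms), and if some $T(i,j)=0$ the corresponding state is reached with probability zero so the ratio is irrelevant; I would note this briefly to keep the argument rigorous.

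I do not expect a genuine obstacle here — the lemma is essentially a bookkeeping exercise once the DP recurrence is in hand. The one place to be careful is making the induction hypothesis precise enough (conditioning on the current state and the fixed already-chosen prefix, and asserting the conditional completion probability), so that the inductive step composes cleanly; stating it sloppily is the only way this proof goes wrong.
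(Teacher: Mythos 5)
Your proof is correct and takes essentially the same approach as the paper: both rest on the DP recurrence $T(i,j)=T(i,j-1)+\alpha_j T(i-1,j-1)$ and on multiplying the per-step inclusion/exclusion probabilities so that the product telescopes to $\alpha_s/T(k,n)$. Your reverse induction on the state $(i,j)$ is just a more rigorous packaging of the paper's direct telescoping-product computation (and your added remarks on well-definedness of the probabilities and on zero-probability states are a welcome, if minor, refinement).
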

\begin{proof}
It is easy to see that Table $T(i,j)$ can be computed in $poly(n,k)$.  We first show that the ``while" loop in Algorithm~\ref{alg:EntropSampling} terminates within at most $n$ steps. In fact, $j$ decreases by $1$ each step and furthermore $j\geq i\geq 0$ always holds.  This is because when $j$ decreases until $j=i$, $j$ will  be sampled with probability $\frac{\alpha_j T(i-1,j-1)}{T(i,j)}=\frac{\alpha_i T(i-1,i-1)}{T(i,i)} = 1$; then both $j$ and $i$ will decrease by $1$ (Step $6-9$). This continues until $i=0$. Furthermore, the algorithm terminates with $|s| = k$ because the cardinality of $s$ always satisfies $|s|=k-i$ by Step $6-8$ until the termination at $i=0$. Therefore, Algorithm~\ref{alg:EntropSampling} runs in $poly(n,k)$ time.
	
	Now we show that Algorithm~2 outputs $s$ with probability $\theta_s$. Let the output $s=\{i_1,...,i_k\}$ be sorted in decreasing order, i.e., $i_1 > i_2 >...>i_k$. Notice that
	\begin{equation*}
	T(i,j)=\alpha_j T(i-1,j-1)+T(i,j-1).
	\end{equation*}
	Therefore, in the {\it Sampling} step (Step $5$) of Algorithm~\ref{alg:EntropSampling}, $j$ is not included to $s$ with probability $T(i,j-1)/T(i,j)$. Therefore, to sample $s=\{i_1,...,i_k\}$, it must be the case that $n,n-1,...,i_1+1$  are not included, while $i_1$ is included; $i_1 -1,...,i_2+1$ are not included, while $i_2$ is included; and so on so forth. In addition, the sampling in each of these steps is independent and the probability of each  step is known. Therefore, by multiplying these probabilities together, we have 
	\begin{eqnarray*}
		P(s) &=& \frac{T(k,n-1)}{T(k,n)}\times \frac{T(k,n-2)}{T(k,n-1)} ... 
		\times \frac{\alpha_{i_1}T(k-1,i_1-1)}{T(k,i_1)} \\
		& & \times \frac{T(k-1,i_1-2)}{T(k-1,i_1-1)} ...\frac{\alpha_{i_k} T(0,i_k-1)}{T(1,i_k)}  \\
		&=& \frac{  \Pi_{t\leq k} \alpha_{i_t}  }{T(k,n)}  \\
		&=& \theta_s
	\end{eqnarray*}
This gives precisely the probability we want.
\end{proof}

\emph{Remark:} we notice that {\it approximately uniform} sampling from combinatorial structures has been studied in theoretical computer science \cite{Sample}. Algorithm~\ref{alg:EntropSampling} uses a variant of the algorithm in \cite{Sample}, and extends their results  to the {\it weighted} (by $\theta_s^*$) and {\it exact} case.

\subsection{Uniform Comb Sampling}\label{sec:comb}
\cite{Tsai10a} presented the Comb Sampling algorithm, which randomly samples a pure strategy and achieves a given marginal in expectation. The algorithm can be elegantly described as follows (also see Figure~\ref{fig:comb}): thinking of $k$ resources as $k$ buckets with height 1 each, we then put each target, the height of which equals precisely its marginal  probability, one by one into the buckets. If one bucket gets full when filling in a certain target, we move the ``rest" of that target to a new empty bucket. Continue this until all the targets are filled in, at which time we know that $k$ buckets are also full. The algorithm then takes a horizontal line with a uniformly randomly chosen height from the interval $[0,1]$, and the $k$ targets intersecting the horizontal line constitute the sampled pure strategy. As easily observed, Comb Sampling achieves the marginal coverage in expectation \cite{Tsai10a}.
\begin{figure}
	\begin{center}
		\includegraphics[bb=00bp 50bp 670bp 470bp,clip,scale=0.4]{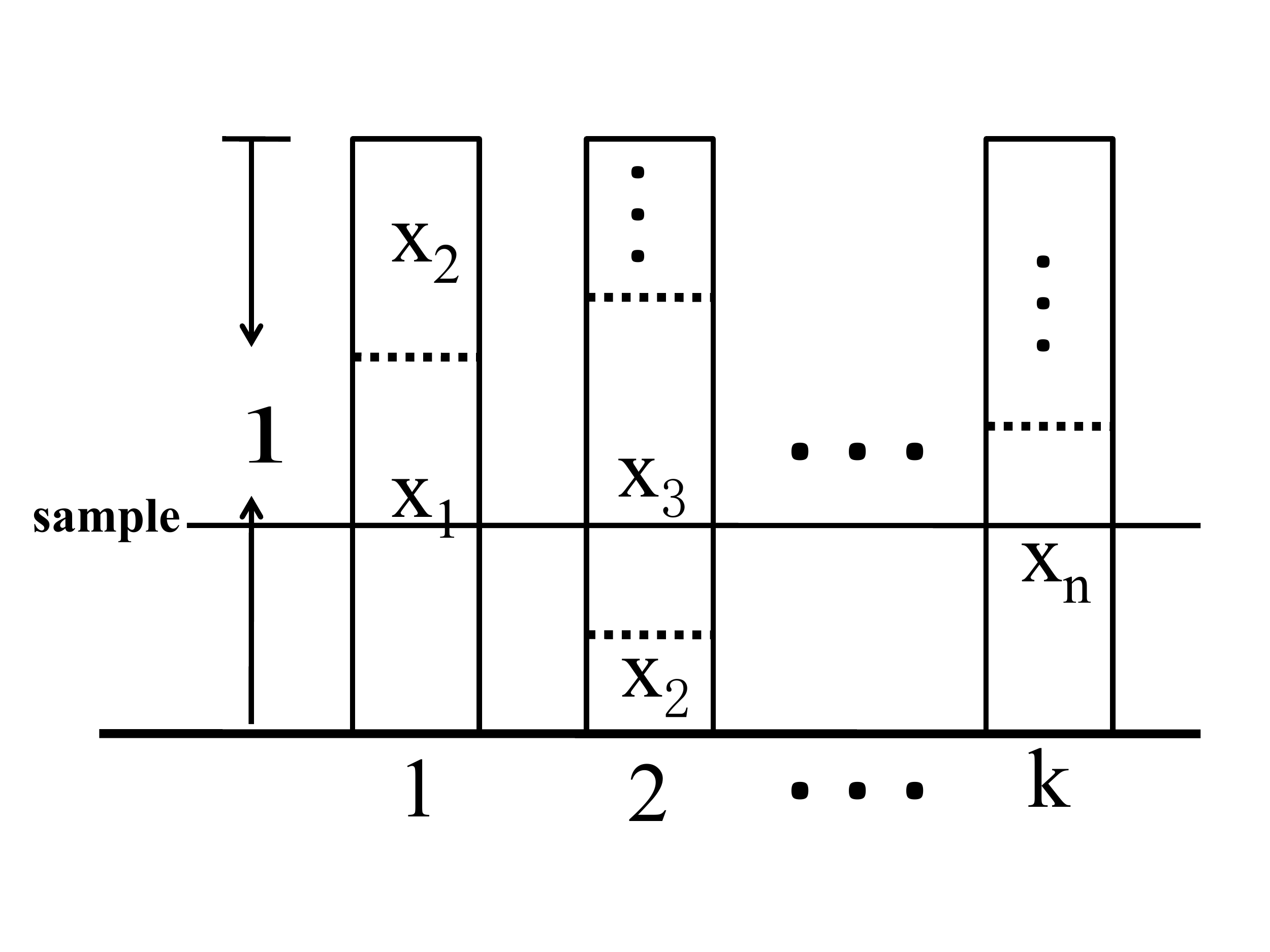}
	\end{center}
	\caption{\label{fig:comb} Comb Sampling}
\end{figure} 

However, is Comb Sampling robust against information leakage? We first observe that Comb Sampling generates a mixed strategy with support size at most $n+1$, which precisely matches the upper bound of Carath\'{e}odory's theorem.
\begin{proposition}\label{prop:CSsupport}
	Comb Sampling generates a mixed strategy which mixes over at most $n+1$ pure strategies.
\end{proposition}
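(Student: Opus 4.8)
The plan is to describe the pure strategy output by Comb Sampling as an explicit piecewise-constant function of its single source of randomness --- the height $h \in [0,1)$ of the horizontal line --- and then bound the number of pieces. First I would set up coordinates for the picture in Figure~\ref{fig:comb}. Put $p_0 = 0$ and $p_t = \sum_{i=1}^{t} x_i$ for $t \in [n]$, so that $0 = p_0 \le p_1 \le \cdots \le p_n = \sum_{i\in[n]} x_i \le k$ by the capacity constraint of LP~\eqref{lp:approxOPT}. Laying the $k$ buckets end to end along $[0,k)$, bucket $b$ is the segment $[b-1,b)$ and target $i$ is the segment $I_i = [p_{i-1}, p_i)$; since each $x_i \le 1$, no $I_i$ contains two points a unit apart. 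Drawing the line at height $h$ amounts to placing a ``comb'' with teeth at $h, 1+h, \dots, (k-1)+h$, exactly one tooth per bucket, and the sampled pure strategy is
\[
   s(h) \;=\; \bigl\{\, i \in [n] \;:\; I_i \ni (b-1)+h \text{ for some } b \in [k] \,\bigr\}.
\]
Since each $I_i$ is too short to catch two teeth, distinct buckets select distinct targets, so $s(h)$ is a valid pure strategy of size at most $k$.

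Next I would bound how often $s(h)$ changes as $h$ ranges over $[0,1)$. For fixed $i$ and $b$, the condition $I_i \ni (b-1)+h$ reads $p_{i-1}-(b-1) \le h < p_i - (b-1)$, an interval whose endpoints have the form $p_t - (b-1)$ with $t \in \{0,\dots,n\}$; such a number lies in $[0,1)$ only when $b-1 = \lfloor p_t \rfloor$, in which case it equals the fractional part $q_t := p_t - \lfloor p_t\rfloor$. Hence $s(h)$ can change only as $h$ crosses one of $q_0, q_1, \dots, q_n$. Because $q_0 = 0$ is the left endpoint of $[0,1)$, at most $n$ of these lie in the open interval $(0,1)$, and they split $[0,1)$ into at most $n+1$ subintervals on each of which $s(h)$ is constant. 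As $h$ is drawn from a continuous distribution, the finitely many breakpoints carry zero probability; therefore Comb Sampling is supported on at most $n+1$ distinct pure strategies, which also matches the Carath\'eodory bound since the marginal vector lies in $\RR^n$.

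The only real care this argument needs --- and the step most likely to hide a subtlety --- is in pinning down the comb picture precisely together with its boundary conventions: half-open versus closed intervals at bucket and target boundaries, and the case $\sum_i x_i < k$, in which a tooth may land in the unused region $[\sum_i x_i, k)$ and simply contribute nothing to $s(h)$. Neither affects the count: the offending offsets form a measure-zero set, and the unused region only makes $s(h)$ a smaller (still valid) subset of $[n]$. Everything else is bookkeeping, and the bound $n+1$ falls out because the $n+1$ cumulative sums $p_0, \dots, p_n$ produce at most $n$ interior breakpoints.
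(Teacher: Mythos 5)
Your proof is correct. The paper states Proposition~\ref{prop:CSsupport} without providing a proof, and your argument is the natural one: unrolling the $k$ buckets onto $[0,k)$, the sampled strategy $s(h)$ is a piecewise-constant function of the line height $h$ whose only possible breakpoints in $[0,1)$ are the fractional parts of the cumulative sums $p_0,\dots,p_n$, giving at most $n$ interior breakpoints and hence at most $n+1$ constant pieces, i.e., at most $n+1$ pure strategies in the support. Your handling of the boundary conventions and of the slack case $\sum_i x_i < k$ is also sound.
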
 
%\begin{proof}
%In Figure~\ref{fig:comb}, let the sample line move from height $0$ to height $1$ continuously. It is easy to see that the sampled pure strategy changes only when it meets one dot line in any bucket. However, there are at most $n-1$ dot lines (because of $n$ targets), so the total number of possible pure strategies is $(n-1)+2=n+1$. 
%\end{proof}
Proposition~\ref{prop:CSsupport} suggests that the mixed strategy sampled by Comb Sampling might be very easy to explore. Therefore we propose a variant of the Comb Sampling algorithm. Our key observation is that Comb Sampling achieves the marginal coverage regardless of the order of the targets. That is, the marginal is still obtained if we randomly shuffle the order of the targets {\it each time} before sampling, and then fill in them one by one. Therefore, we propose the following Uniform Comb Sampling (UniCS) algorithm:
\begin{enumerate}
	\item Order the $n$ targets uniformly at random; 
	%\footnote{This can be done as follows: initialize $S=[n]$; for $i=1$ to $n$, uniform randomly takes an element $t_i$ from $S$ then set $S=S \setminus\{t_i\}$. $t_1,...,t_n$ is a uniform randomly generated order.}
	\item fill the targets into the buckets based on the random order, and then apply Comb Sampling.
\end{enumerate}
Since the order is chosen randomly  each time, the mixed strategy implemented by UniCS mixes over exponentially many pure strategies, and achieves the marginal.
\begin{proposition}
	Uniform Comb Sampling (UniCS) achieves the marginal coverage probability.
\end{proposition}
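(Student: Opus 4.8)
The plan is to reduce the claim to the corresponding (known) guarantee for ordinary Comb Sampling, together with a one-line application of the law of total probability. The key structural fact, already observed in the paragraph preceding the proposition, is that Comb Sampling achieves the target marginals \emph{for every fixed ordering} of the $n$ targets: the order in which targets are poured into the buckets only changes which bucket (and which sub-interval of $[0,1]$) a given target occupies, but never the \emph{total} height $x_i$ that target $i$ occupies across all $k$ buckets. Since the random horizontal line sits at the same height $h\in[0,1]$ in every bucket, and (almost surely) target $i$'s occupied heights form disjoint sub-intervals of $[0,1]$ of total length $x_i$, the probability that the line meets target $i$ is exactly $x_i$, independent of the chosen order.

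First I would make this order-independence precise: fix any permutation $\pi$ of $[n]$, let $s_\pi$ denote the random pure strategy produced by Comb Sampling when the targets are filled in the order $\pi$, and record that $\Pr[i \in s_\pi] = x_i$ for every $i \in [n]$ — this is exactly the guarantee of \cite{Tsai10a}, whose proof uses no property of the ordering. (If $\sum_i x_i < k$ one first pads with a single dummy target of height $k-\sum_i x_i$; this does not affect the coverage probabilities of the real targets and is discarded from the sampled set.)

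Next I would combine over orders. UniCS draws $\pi$ uniformly at random from the $n!$ permutations of $[n]$ and then runs Comb Sampling with the order $\pi$, so the output $s$ satisfies, for each $i\in[n]$,
\[
\Pr[i \in s] \;=\; \frac{1}{n!}\sum_{\pi}\Pr[i \in s_\pi] \;=\; \frac{1}{n!}\sum_{\pi} x_i \;=\; x_i,
\]
which is the desired conclusion. Conditioning on $\pi$ is legitimate because, once $\pi$ is fixed, the choice of $\pi$ and the subsequent uniformly random cut height are independent.

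I do not expect a genuine obstacle here; the only point needing care is the justification that the Comb Sampling marginal guarantee is truly order-agnostic — i.e., that the ``total occupied height equals $x_i$'' argument survives a target being split across two buckets — but this is precisely the computation underlying the original Comb Sampling analysis and was already flagged in the discussion preceding the proposition, so I would cite it rather than reprove it.
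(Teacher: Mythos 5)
Your proposal is correct and matches the paper's own (implicit) justification: the paper proves this proposition simply by observing, in the paragraph preceding it, that Comb Sampling achieves the marginals for \emph{any} fixed ordering of the targets, so averaging over a uniformly random permutation preserves them. Your write-up just makes the law-of-total-probability step explicit, which is a fine (and slightly more careful) rendering of the same argument.
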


\section{Experiments}\label{sec:exper}
Traditional algorithms for computing Strong Stackelberg Equilibrium (SSE) only optimize the coverage probability at each target, without considering their correlations. In this section, we experimentally study how traditional algorithms and our new algorithms perform in presence of {\it probabilistic} or {\it adversarial} information leakage. In particular, we compare the following five algorithms. 
\begin{itemize}
\item {\it Traditional}: optimal marginal + comb sampling, the traditional way to solve security games with no scheduling constraints \cite{Kiekintveld09,Tsai10a};

\item {\it OPT}: the optimal algorithm for PRIL or ADIL model (Section 3.1) using column generation with the defender oracle in Algorithm~\ref{alg:DefenderOracle};

\item {\it indepSample}: independent sampling without replacement (Section~\ref{sec:approx});

\item {\it MaxEntro}: max entropy sampling (Algorithm~\ref{alg:EntropSampling});

\item {\it UniCS}: uniform comb sampling (Section 4.2).
\end{itemize}
All algorithms are tested on the following two sets of data:

{\bf Los Angeles International Airport (LAX) Checkpoint Data} from \cite{Armor}. This problem was modeled as a Bayesian Stackelberg game with multiple adversary types in \cite{Armor}. To be consistent with our model, we instead only consider the game against one particular type of adversary -- the terrorist-type adversary, which is the main concern of the airport. The defender's rewards and costs are obtained from \cite{Armor} and the game is assumed to be zero-sum in our experiments.

{\bf Simulated Game Payoffs}. A systematic examination is conducted with simulated  payoffs. All  generated games have 20 targets and 10 resources. The reward  $r_i$ (cost $c_i$) of each target $i$ is chosen uniformly at random from the interval $[0,10]$ ($[-10,0]$).  
\begin{figure}
	\centering
	\begin{minipage}[t]{0.45\columnwidth}%
		\includegraphics[bb=60bp 185bp 670bp 610bp,clip,scale=0.45]{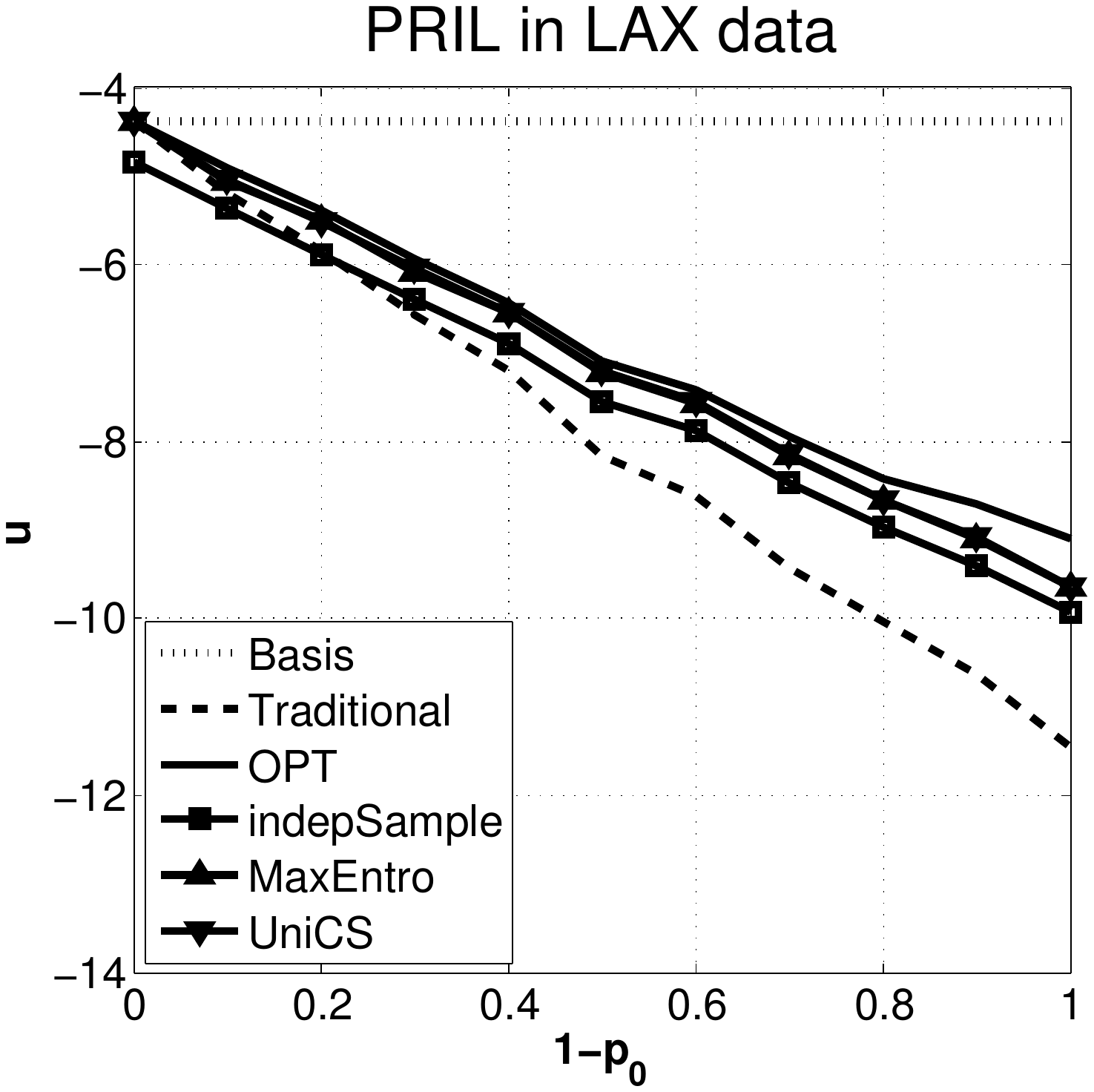}%
	\end{minipage}\qquad{}%
	\begin{minipage}[t]{0.45\columnwidth}%
		\includegraphics[bb=97bp 185bp 670bp 610bp,clip,scale=0.45]{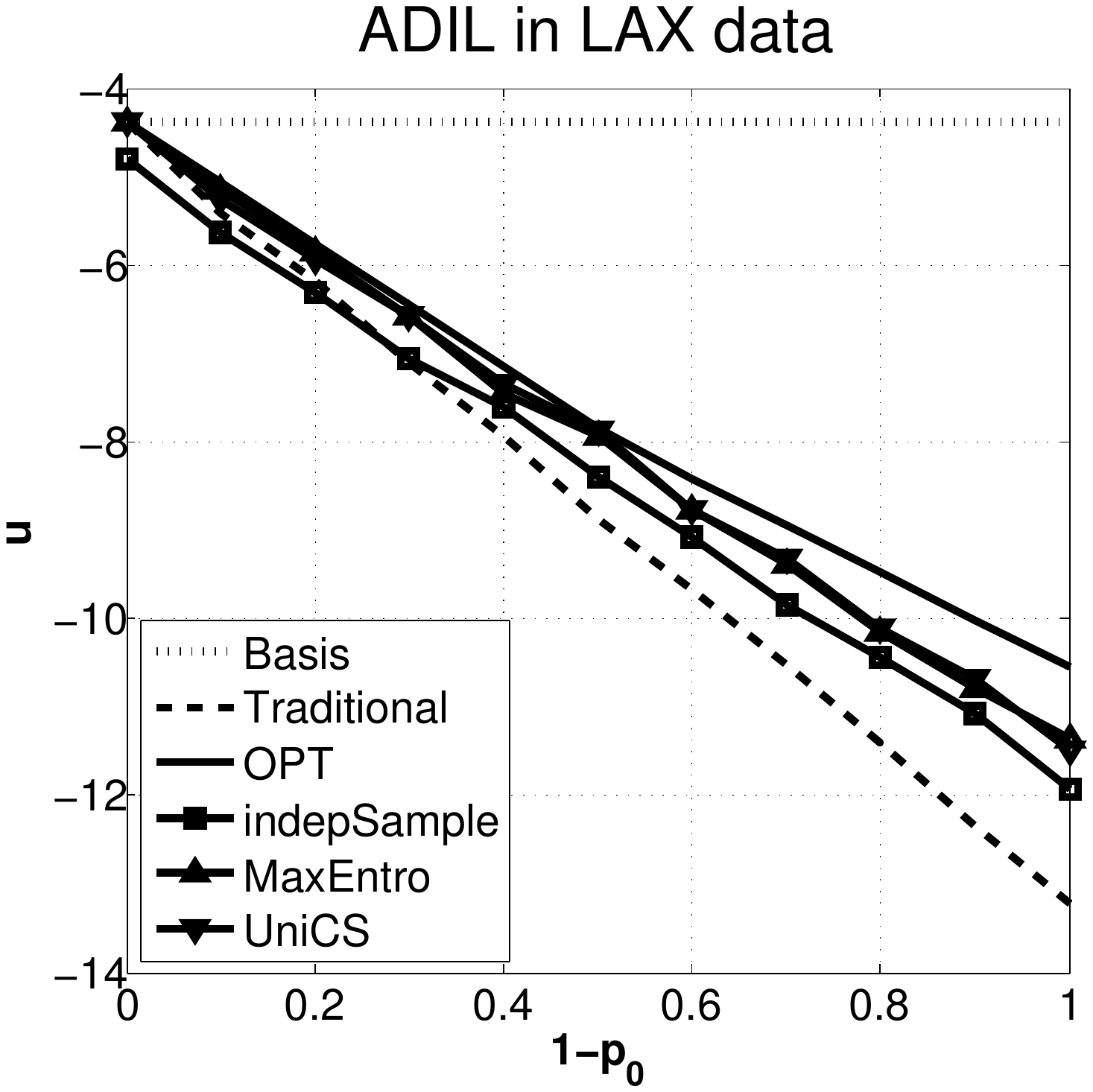}%
	\end{minipage}
	\caption{\label{fig:armor} Comparisons on real LAX airport data. }
\end{figure} 

\begin{figure}
	\begin{minipage}[t]{0.45\columnwidth}%
		\includegraphics[bb=60bp 185bp 670bp 610bp,clip,scale=0.45]{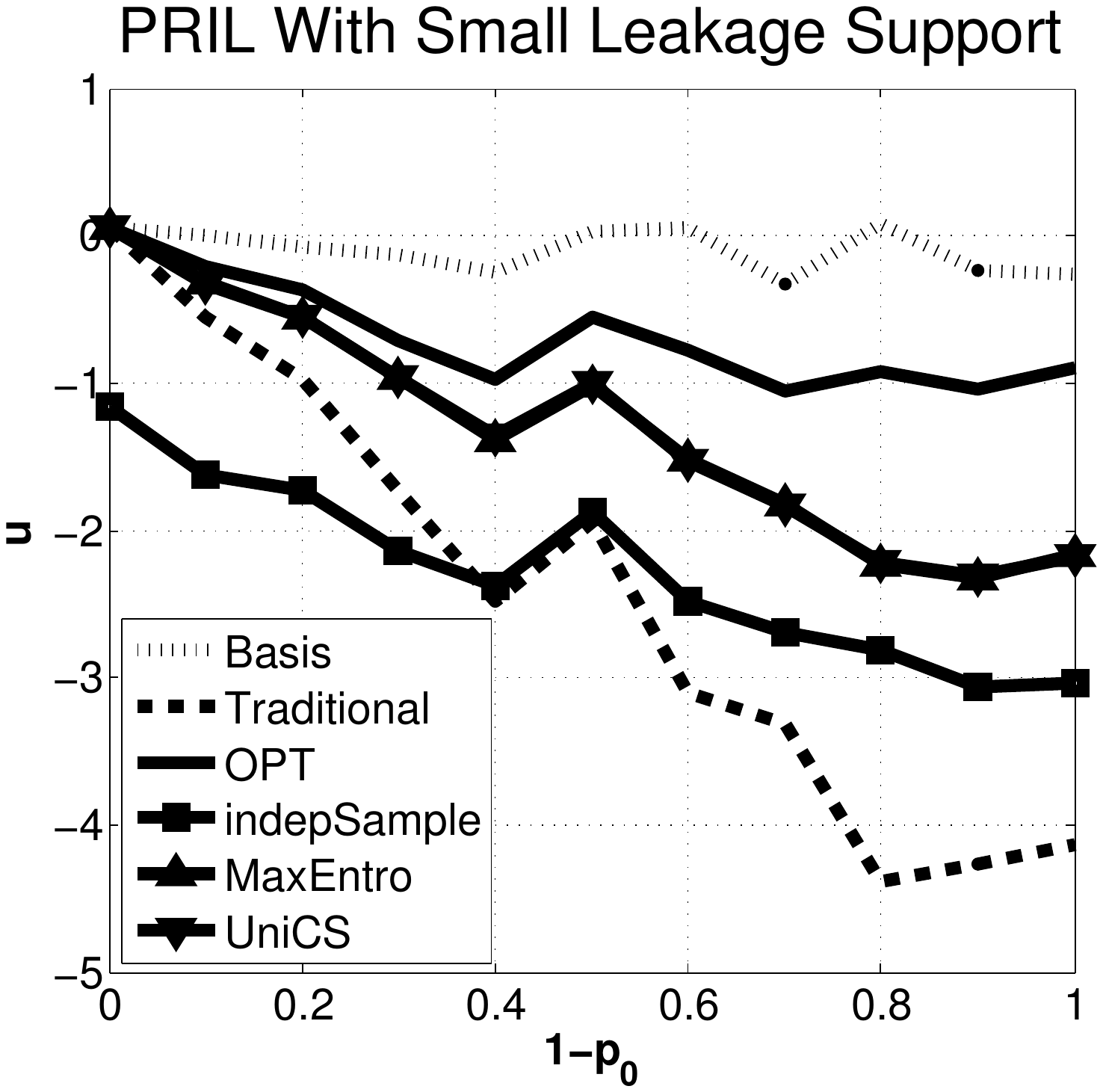}%
	\end{minipage}\qquad{}%
	\begin{minipage}[t]{0.45\columnwidth}%
		\includegraphics[bb=99bp 185bp 670bp 610bp,clip,scale=0.45]{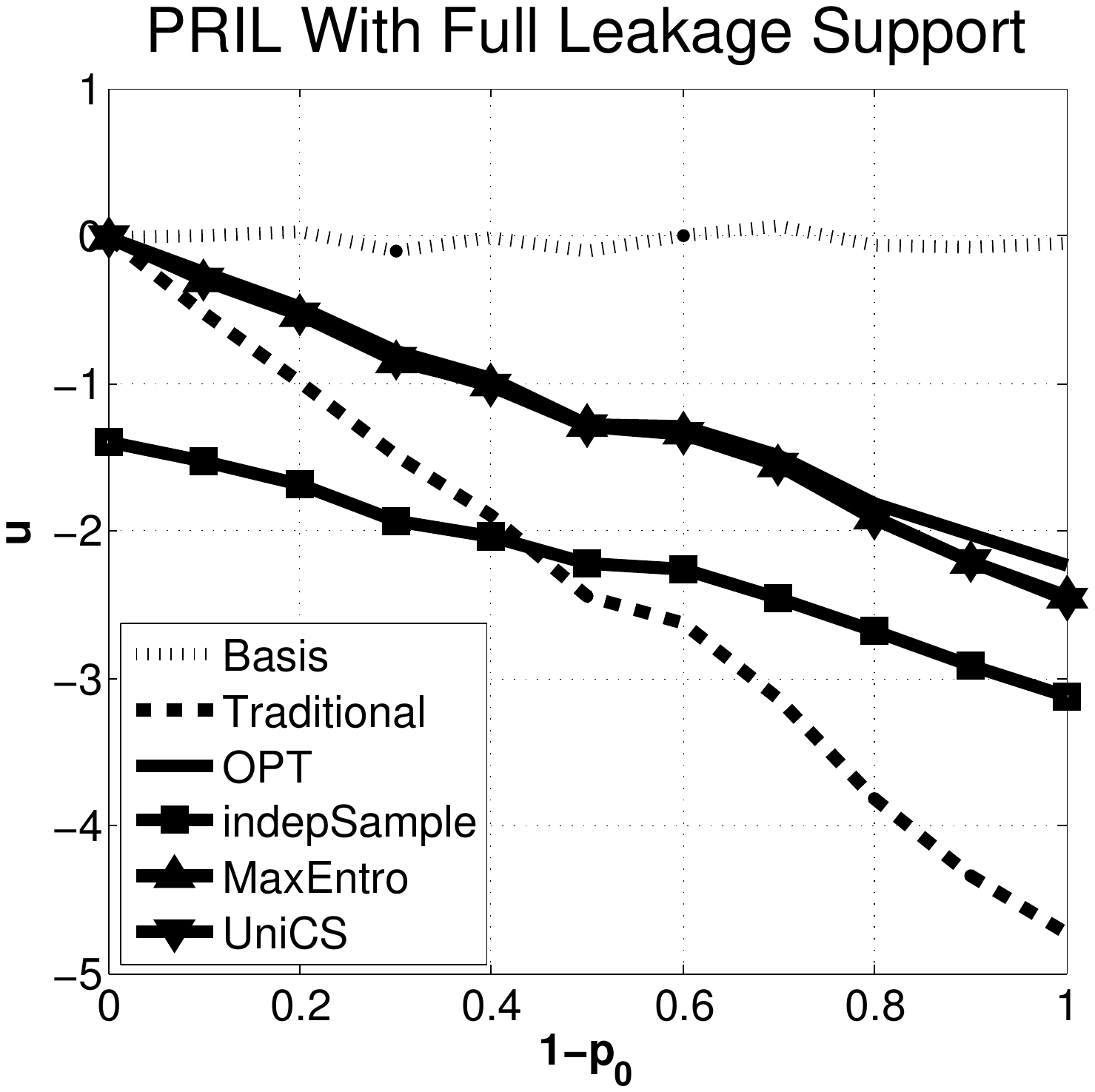}%
	\end{minipage}
	\begin{minipage}[t]{0.45\columnwidth}%
		\includegraphics[bb=60bp 185bp 670bp 610bp,clip,scale=0.45]{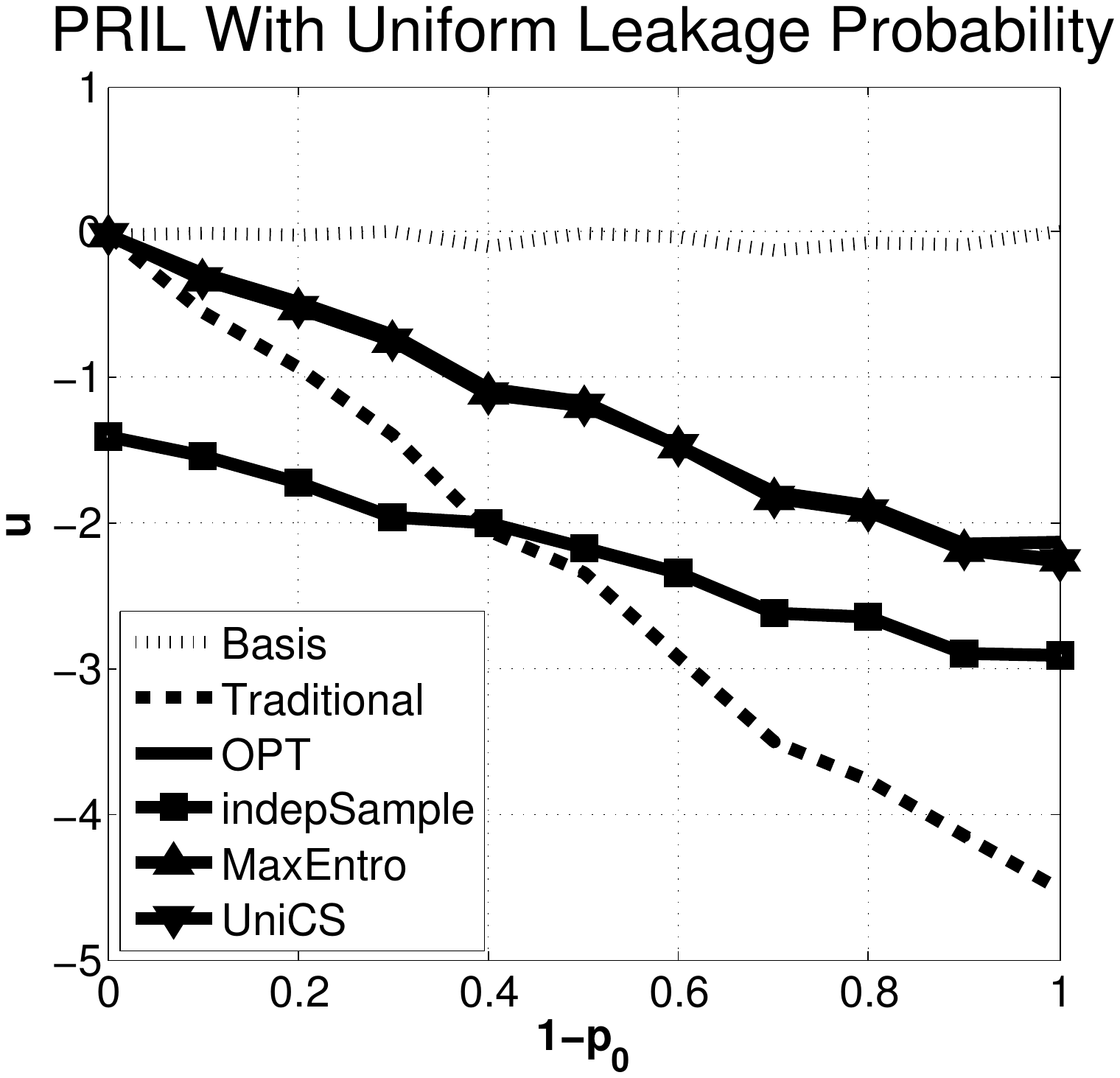}%
	\end{minipage}\qquad{}%
	\begin{minipage}[t]{0.45\columnwidth}%
		\includegraphics[bb=99bp 185bp 670bp 610bp,clip,scale=0.45]{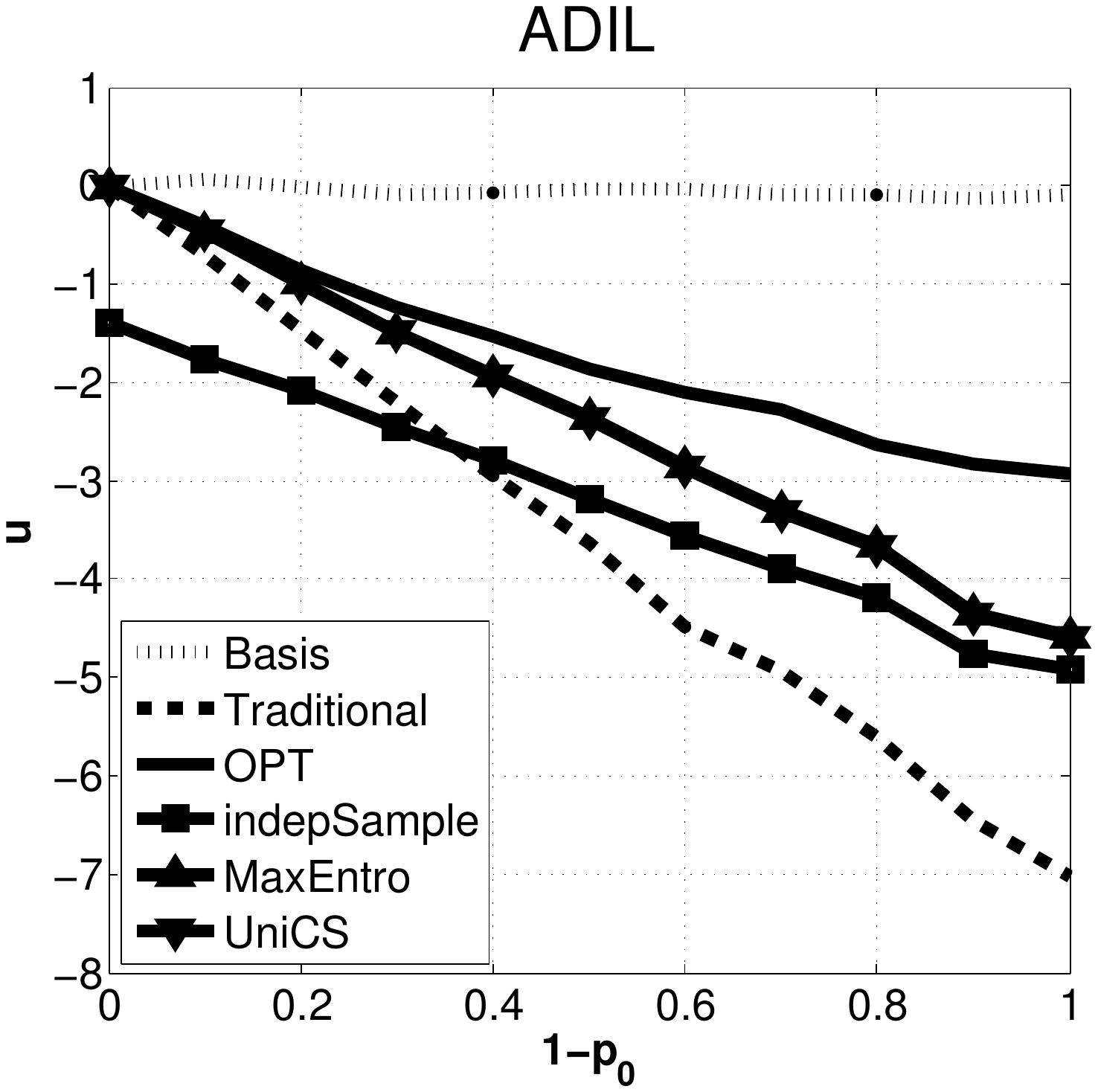}%
	\end{minipage}
	\caption{\label{fig:sim} Comparisons in Simulated Games. }
\end{figure}

In terms of running time, all the algorithms run efficiently as expected (terminate within seconds using MATLAB) except the optimal algorithm {\it OPT}, which takes about $3$ minutes per simulated game on average. Therefore we mainly compare defender utilities. All the comparisons are listed in Figure~\ref{fig:armor} (for LAX data) and Figure~\ref{fig:sim} (for simulated data). The line ``Basis" is the utility with no leakage and is listed as a basis for utility comparisons. Y-axis is the defender's utility -- the higher, the better. We examine the effect of the {\it total probability of leakage} (i.e., the x-axis $1-p_0$) on the defender's utility and consider $1-p_0 = 0, 0.1, ..., 1$. For probabilistic information leakage, we randomly generate the probabilities that each target leaks information with the constraint $\sum_{i=1}^n p_i = 1-p_0$. For the case of leakage from small support (for simulated payoffs only), we randomly choose a support of size $5$. All the utilities are \emph{averaged} over $50$ random games except the ADIL model for LAX data.
%(for the PRIL model of LAX data, the randomness comes from the generation of leakage information). 
For the simulated payoffs, we also consider a special case of uniform leakage probability of each target (see Theorem~\ref{thm:uniformApprox}). The following observations follow from the figures.

{\bf Observation 1}. The gap between the line ``{\it Basis}" and ``{\it OPT}" shows that information leakage from even one target does cause dramatic utility decrease to the defender. Moreover, adversarial leakage causes more utility loss than probabilistic leakage; leakage from a restricted small support of targets causes less utility decrease than from full support.

{\bf Observation 2}. The gap between the line ``{\it OPT}" and ``{\it Traditional}" demonstrates the necessity of handling information leakage. In particular, the  relative loss $u(OPT)-u(Basis)$ is approximately half of the relative loss $u(Traditional)-u(Basis)$ in Figure~\ref{fig:sim} (and  $65 \%$ in  Figure~\ref{fig:armor}). Furthermore, if  leakage is from a small support (left-up panel in Figure~\ref{fig:sim}), {\it OPT} is close to {\it Basis}.  

{\bf Observation 3}. {\it MaxEntro} and {\it UniCS} have almost the same performance (overlapping in all these figures). Both algorithms are almost optimal when the leakage support is the full set $[n]$ (they almost overlap with {\it OPT} in the right-up and left-down panels in Figure~\ref{fig:sim}).

{\bf Observation 4}. An interesting observation is that {\it IndepSample} outperforms {\it Traditional} at $1-p_0=0.3$ or $0.4$ in all of these figures, which is around $\frac{1}{e} \approx 0.37$. Furthermore, the gap between {\it IndepSample}  and {\it OPT} does not change much at different $1-p_0$. 

{\bf Observation 5}. From a practical view, if the leakage is from a small support, {\it OPT} is preferred as it admits efficient algorithms (Section~\ref{subsec:smallSupport}); if the leakage is from a large support, {\it MaxEntropy} and {\it UniCS} are preferred as they can be computed efficiently and are close to optimality. From a theoretical perspective,  we note that the intriguing performance of {\it IndepSample}, {\it MaxEntropy} and {\it UniCS} raises questions for future work.
 
\section{Conclusions and Discussions}
In this paper, we considered partial information leakage in Stackelberg security games. We focused on the one-target leakage case, but do emphasize that  our models, hardness results and algorithms can be easily generalized. Our results raise several new research questions, e.g., is it possible to derive a theoretical approximation guarantee for {\it MaxEntro} and {\it UniCS}, and can we develop efficient algorithms to handle information leakage in other security game settings? % Information has a profound influence on the equilibrium outcomes in strategic interactions. 
More generally, it is an interesting problem to study analogous issues of information leakage in other settings beyond security, e.g., auctions or general games.    

\bibliography{refer}
\bibliographystyle{plain}

\section{Appendix A: Derivation of the Defender Oracle}
A defender oracle is a subroutine used to solve security games with large number of pure strategies by the \emph{Column Generation} technique. In this section, we first describe the technique of column generation and then derive the formulation for the defender oracle in our models.

Recall that LP~\eqref{lp:naiveLP} has a large number of variables because the number of pure strategies is exponential. However, by counting the number of activated constraints at optimality, we know that only polynomially many of these pure strategies will have non-zero probabilities at optimality since most pure strategies activate the corresponding constraint $\theta_s \geq 0$ and take probability $0$.  Column generation is based on this observation, i.e., the optimal mixed strategy has a small support. Basically, instead of solving LP~\eqref{lp:naiveLP} on the set $S$ of all pure strategies, it starts from a small subset of pure strategies, denoted as $\color{red}{A}$, and solve the following ``restricted" LP.

\begin{lp}\label{lp:CG_LP}
	\maxi{p_0 u+\sum_{i=1}^n p_i(u_i +v_i)}
	\st
	\qcon{ u\leq r_j x_{jj} +c_j (1-x_{jj}) }{j \in [n]}
	\qcon{u_{i}\leq r_j x_{ij}+c_j (x_{ii}-x_{ij}) }{ i,j \in [n]}
	\qcon{v_{i}\leq r_j (x_{jj}-x_{ij})+ c_j (1-x_{ii}- x_{jj}+x_{ij}) }{ i,j \in [n]}
	\qcon{ x_{ij}=\sum_{s \in  \textcolor{red}{A} : i,j \in s} \theta_s }{ i,j \in [n]}
	\con{ \sum_{s\in \color{red}{A}} \theta_{s} = 1}
	\qcon{\theta_s \geq 0}{  s \in \color{red}{A}}
\end{lp}
Notice that the only difference between LP~\eqref{lp:naiveLP} and LP~\eqref{lp:CG_LP} is that the set $S$ of all pure strategies is substituted by a small subset $\color{red}{A}$. In practice, $\color{red}{A}$ is usually initialized with a small number of pure strategies that are arbitrarily chosen. Column generation proceeds roughly as follows: 1. it solves LP~\eqref{lp:CG_LP}; 2. by checking the dual of LP~\eqref{lp:CG_LP}  the defender oracle judges whether the computed optimal solution to LP~\eqref{lp:CG_LP} is also optimal to LP~\eqref{lp:naiveLP} (setting all pure strategies in $S \setminus \color{red}{A}$ with probability $0$); if not, the oracle finds a new pure strategy to be added to the set $\color{red}{A}$ and updates $\color{red}{A}$. This procedure continues until the defender oracle judges that the computed optimal solution w.r.t. current $\color{red}{A}$ is also optimal to LP~\eqref{lp:naiveLP}. We now explain the underlying rationale of the column generation technique.

We first derive the dual of LP~\eqref{lp:CG_LP}. In fact, to emphasize the key aspects and avoid messy derivations, we  re-write LP~\eqref{lp:CG_LP} in the following abstract form:
\begin{lp}\label{lp:CG_compactLP}
	\maxi{d^T y}
	\st
	\con{ M x + N y \leq c }
	\qcon{ x_{ij} - \sum_{s \in  \textcolor{red}{A} : i,j \in s} \theta_s = 0 }{ i,j \in [n]}
	\con{ \sum_{s \in \color{red}{A}} \theta_{s} = 1}
	\qcon{\theta_s \geq 0}{  s \in \color{red}{A}}
\end{lp}
where variable $y$ represents the vector consisting of $u,v_i,u_i$ while variable $x$ is the vector representation of $x_{ij}$ (putting $i,j$ in some fixed order); $d$ is a vector summarizing the original objective coefficients; the constraints $M x + N y \leq c$ summarizes the first three  set of constraints in LP~\eqref{lp:CG_LP}. This abstract form not only simplifies our derivation of the dual, more importantly it emphasizes that the column generation technique works regardless of what the first three sets of constraints are as long as there are polynomially many of them. 

Let $M_{index(i,j)}$ be the \emph{column vector} of $M$ corresponding to variable $x_{ij}$ and $N_l$ be the \emph{column vector} of $N$ corresponding to the $l$'th component of $y$. We can now simply derive the dual of LP~\eqref{lp:CG_compactLP} as follows: 
\begin{lp}\label{lp:CG_compactdual}
	\mini{c^T \rho + \omega}
	\st
	\qcon{ \rho^T N_l \geq d_l }{\text{all } l}
	\qcon{ \rho^T M_{index(i,j)} + \beta_{ij \geq 0} }{ i,j \in [n]}
	\qcon{ - \sum_{i,j \in s} \beta_{ij} + \omega \geq 0 }{ s \in \color{red}{A}  }
	\con{\rho \geq 0}
\end{lp}
where $\rho$ are the dual variables w.r.t. the first set of constraints in LP~\eqref{lp:CG_compactLP} and $\beta_{ij}$, $\omega$ are the dual variables w.r.t. the second and third set of constraints.

First notice that the optimal solution to LP~\eqref{lp:CG_compactLP} (denoted as $OptSol_{\color{red}{A}}$) and the optimal solution to LP~\eqref{lp:CG_compactdual} (denoted as $OptSolDual_{\color{red}{A}}$) can both be computed efficiently when $A$ is small. A key observation here is that, if $OptSolDual_{\color{red}{A}}$, in particular, $\omega$ and $ \beta_{ij}$, happens to make the constraints $ - \sum_{i,j \in s} \beta_{ij} + \omega \geq 0, \, \forall s \in \color{red}{A}$ hold more generally as  $ - \sum_{i,j \in s} \beta_{ij} + \omega \geq 0, \, \forall s \in S$, then we claim that the $OptSol_{\color{red}{A}}$ is also an optimal solution  to LP~\eqref{lp:naiveLP} (by picking pure strategies in $S\setminus \color{red}{A}$ with probability $0$). This is because, if we substitute $\color{red}{A}$ by $S$ in both LP~\eqref{lp:CG_compactLP} and LP~\eqref{lp:CG_compactdual}, $OptSol_{\color{red}{A}}$ is still feasible to LP~\eqref{lp:CG_compactLP} because all the newly added strategies (in  $S\setminus \color{red}{A}$) have probability $0$; $OptSolDual_{\color{red}{A}}$ is still feasible to LP~\eqref{lp:CG_compactdual} because our $\omega, \beta_{ij}$ make constraints $ - \sum_{i,j \in s} \beta_{ij} + \omega \geq 0$ hold for all $ s \in S$ by assumption. Furthermore, \emph{complementary slackness} still holds since the added new variables in LP~\eqref{lp:CG_compactLP} all take value $0$. By linear program basics, we know that $OptSol_{\color{red}{A}}$ is still optimal if we substitute $\color{red}{A}$ in   LP~\eqref{lp:CG_compactLP} by $S$, which is precisely LP~\eqref{lp:naiveLP}.

As a result, our key task is to judge whether $ - \sum_{i,j \in s} \beta_{ij} + \omega \geq 0$ holds for all $s \in S$ for a given dual solution. This is equivalent to decide whether $\omega \geq \max_{s\in S} \left[ \sum_{i,j \in s} \beta_{ij} \right]$. The defender oracle is then defined as the following problem:
\begin{equation}\label{eqn:oracle}
\max_{s\in S} \left[ \sum_{i,j \in s} \beta_{ij} \right]= \max_{s \in S} s^T (\frac{M+M^T}{2})s
\end{equation}
 where $M$ is the matrix satisfying $M_{ij} = \beta_{ij}$. In other words, the defender oracle finds a pure strategy $s$ that maximizes the sum  $\sum_{i,j \in s} \beta_{ij}$.
 
 With this oracle, column generation proceeds, in more details, as follows: 1. compute LP~\eqref{lp:CG_compactLP} and LP~\eqref{lp:CG_compactdual}; 2. use the defender oracle to solve Problem~\eqref{eqn:oracle}: if the optimal value is less than or equal to the dual variable $\omega$, asserts optimality; otherwise, add $s^*$ -- the optimal solution to Problem~\eqref{eqn:oracle} -- to $\color{red}{A}$; 3. repeat until optimality is reached. Notice that the newly added $s^*$ does not belong to the original $\color{red}{A}$ because all $s \in \color{red}{A}$ satisfy $\sum_{i,j \in s} \beta_{ij} \leq \omega$. Column generation does not guarantee polynomial convergence, but usually converges very fast in practice. This is because the optimal mixed strategy usually has a small support. 
 
 When information leaks from a small subset of targets (Section~\ref{subsec:smallSupport}), the set of variables $x_{ij}$ become smaller since only correlation between leaking targets and other targets are considered. By modifying LP~\eqref{lp:CG_compactLP} and LP~\eqref{lp:CG_compactdual} a bit, we can get the defender oracle formulation as described in Section~\ref{subsec:smallSupport}.

\section{Appendix B: Relation between $\mathcal{P}(n,k)$ and the Correlation Polytope}

In this section, we show a connection between $\mathcal{P}(n,k)$ and the correlation polytope, defined as follows: 
\begin{definition}
	\cite{Corpoly} Given an integer $n$, the \emph{Correlation Polytope} $\mathcal{P}(n)$ is defined as follows
	\begin{equation*}
	\mathcal{P}(n)=Conv\left( \{v v^T: v\in \{0,1\}^n\} \right).
	\end{equation*}
	where $Conv(S)$ denotes the convex hull of set $S$. Notice that $vv^T \in \{0,1\}^{n \times n}$.
\end{definition}

The following lemma captures the relation between $\mathcal{P}(n)$ and
$\mathcal{P}(n,k)$.
\begin{lemma}
	\label{lem: polytope property}$X \in \mathcal{P}(n,k)$ if and only if the following three constraints hold: (a) $ X\in \mathcal{P}(n)$; (b) $tr(X)= \sum_{i=1}^{n} x_{ii} = k$; and (c)  $sum(X)=\sum_{i,j=1}^{n} x_{ij}=k^{2}$. In other words, $\mathcal{P}(n,k)$ is decided by $\mathcal{P}(n)$ with two additional linear constraints.		 
\end{lemma}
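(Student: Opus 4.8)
The plan is to prove both directions of the equivalence. The forward direction ($X \in \mathcal{P}(n,k)$ implies (a), (b), (c)) is the easy one: if $X$ is feasible, write $X = \sum_s \theta_s\, s s^T$ as a convex combination of rank-one matrices $ss^T$ with $s \in \{0,1\}^n$ and $|s| = k$. Each such $ss^T$ lies in $\mathcal{P}(n)$ by definition, so their convex combination does too, giving (a). For (b), $\mathrm{tr}(ss^T) = s^Ts = |s| = k$ for every pure strategy, so $\mathrm{tr}(X) = \sum_s \theta_s\, k = k$. For (c), $\mathrm{sum}(ss^T) = \one^T s s^T \one = (s^T\one)^2 = k^2$, so by linearity $\mathrm{sum}(X) = k^2$ as well.

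The reverse direction is the substantive part. Suppose $X \in \mathcal{P}(n)$ with $\mathrm{tr}(X) = k$ and $\mathrm{sum}(X) = k^2$. Since $X \in \mathcal{P}(n) = \mathrm{Conv}(\{vv^T : v \in \{0,1\}^n\})$, write $X = \sum_{v} \theta_v\, v v^T$ as a convex combination over (arbitrary-size) binary vectors $v$. The goal is to show that in fact only vectors $v$ with $|v| = k$ can appear with positive weight, which then exhibits $X$ as a convex combination of pure-strategy matrices and hence places it in $\mathcal{P}(n,k)$. The trick is to use (b) and (c) together: from $\mathrm{tr}(vv^T) = |v|$ we get $\sum_v \theta_v |v| = k$, and from $\mathrm{sum}(vv^T) = |v|^2$ we get $\sum_v \theta_v |v|^2 = k^2$. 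Combining, $\sum_v \theta_v (|v| - k)^2 = \sum_v \theta_v |v|^2 - 2k \sum_v \theta_v |v| + k^2 \sum_v \theta_v = k^2 - 2k^2 + k^2 = 0$. Since each term $\theta_v(|v|-k)^2$ is nonnegative and they sum to zero, every $v$ in the support has $|v| = k$. Hence $X$ is a convex combination of matrices $ss^T$ with $|s| = k$, i.e. $X \in \mathcal{P}(n,k)$.

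I expect the main (and really only) obstacle to be a bookkeeping subtlety: the definition of $\mathcal{P}(n)$ ranges over \emph{all} $v \in \{0,1\}^n$, including those with $|v| \neq k$, so one must be careful that the representation $X = \sum_v \theta_v vv^T$ is over this larger index set before the variance argument kicks in to rule out the wrong-cardinality vectors. One should also note that the case $|v| = 0$ (the zero vector) is handled uniformly by the same $(|v|-k)^2 = 0$ argument, assuming $k \geq 1$. Everything else is immediate from linearity of trace and of the all-ones quadratic form, so no further machinery is needed.
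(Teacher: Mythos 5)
Your proof is correct and follows essentially the same route as the paper: decompose $X$ as a convex combination of $vv^T$, read off the first and second moments of the cardinality distribution from $tr(X)=k$ and $sum(X)=k^2$, and conclude the distribution is concentrated on cardinality $k$. The only cosmetic difference is that you compute $\sum_v \theta_v(|v|-k)^2=0$ directly, whereas the paper invokes the equality case of the Cauchy--Schwarz inequality $(\sum p_i)(\sum p_i i^2)\geq(\sum p_i i)^2$ after grouping the vectors by cardinality --- the same fact in different clothing (and the paper omits the easy forward direction that you spell out).
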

\begin{proof}
	We show that, given $X\in \mathcal{P}(n)$, if $X$ satisfies the following
	two linear constraints: $tr(X)=k,\: sum(X)=k^{2}$, then $X\in \mathcal{P}(n,k)$.
	
	Since $X\in \mathcal{P}(n)$, there exits $X_{i}\in \mathcal{P}(n,i)$ and $p_{i}\geq0$,
	such that $X=\sum_{i=1}^{n}p_{i}X_{i}$ and $\sum_{i=1}^{n}p_{i}=1$. That is, $X$ is a convex combination of elements from each $\mathcal{P}(n,i)$. 
	Notice that $\forall X_{i}\in \mathcal{P}(n,i)$, we have $tr(X_{i})=i$ and
	$sum(X_{i})=i^{2}$, since any vertex of $\mathcal{P}(n,i)$ satisfies these
	constraints. Let $X \in \mathcal{P}(n,k)$, then we have:	
	\begin{align*}
	(i): & 1=\sum_{i=1}^{n}p_{i}\\
	(ii): & k=tr(X)=\sum_{i=1}^{n}p_{i}\times tr(X_{i})=\sum_{i=1}^{n}p_{i} \times i\\
	(iii): & k^{2}=sum(X)=\sum_{i=1}^{n}p_{i} \times sum(X_{i})=\sum_{i=1}^{n}p_{i} \times i^{2}
	\end{align*}

	By the Cauchy-Schwarz inequality, we have $(\sum_{i=1}^{n}p_{i})(\sum_{i=1}^{n}p_{i}i^{2})\geq(\sum_{i=1}^{n}p_{i}i)^{2}$.
	Plugging in the above three equations into the Cauchy-Schwarz inequality yields that the equality holds.
	The condition of equality for the Cauchy\textendash{}Schwarz inequality
	is that $p_{i}i^{2}/p_{i}$ is a constant for all $i$, such that
	$p_{i}\not=0$. This shows that there is only one non-zero among $p_{i}$'s. That is $p_{k}=1$. Therefore, $X\in \mathcal{P}(n,k)$.
\end{proof}
\emph{Remark:} we note that \cite{Corpoly} defines correlation polytope in a more general fashion and our definition of $\mathcal{P}(n)$ is in fact an important special case of the definition of correlation polytope in \cite{Corpoly}, which is called the full correlation polytope. Nevertheless, this definition is sufficient for our model.  \cite{Corpoly} proved that membership check for polytope $\mathcal{P}(n)$ is NP-complete. Lemma \ref{lem: polytope property} basically conveys that optimizing over polytope $\mathcal{P}(n,k)$ is no harder than optimizing over  $\mathcal{P}(n)$.  Nevertheless, Lemma~\ref{lem:polyOracle} shows that optimizing over $\mathcal{P}(n,k)$ is still NP-hard.
\end{document}